\newtheorem{lem}{Lemma}
\newtheorem{claim}{Claim}
\newtheorem*{lem*}{Lemma}
\newtheorem{thm}{Theorem}
\newtheorem*{thm*}{Theorem}
\newtheorem{cor}{Corollary}
\newtheorem{mydef}{Definition}
\newtheorem*{mydef*}{Definition}
\newcommand{\ket}[1]{\left| #1 \right>} 
\newcommand{\bra}[1]{\left< #1 \right|} 
\begin{document}
\title{Quantum Correlated Equilibria in Classical Complete Information Games}
\author{Alan Deckelbaum\thanks{Department of Mathematics, MIT. Email: deckel@mit.edu. Supported by Fannie and John Hertz Foundation, Professor Daniel Stroock Fellowship.}}

\maketitle
\thispagestyle{empty}



\section{Introduction}

Since the work of Aumann \cite{aumann}, the concept of correlated equilibrium (CE) has played an important role in the study of games. Correlated equilibria always exist, and unlike Nash equilibria, which are believed to be computational intractable (see \cite{daskalakis}), a correlated equilibrium can be computed efficiently in a broad class of succinctly-representable games \cite{papadimitriou}. In a correlated equilibrium, a trusted correlating device selects strategies  from a joint probability distribution and privately sends a recommended move to each player. Each player maximizes his expected utility by following his recommendation. The question of how to implement a correlated equilibrium without a trusted third party has recently attracted the attention of the cryptographic community. For example, it has been studied (see \cite{dodis}, \cite{ilm1}, \cite{ilm2}, \cite{lepinski}, \cite{urbano}) how to use cryptographic protocols to replace the trusted mediator with multiple rounds of interaction between players.

In this paper, we study the scenario where, instead of having access to a mediator or the ability to perform cryptographic protocols via rounds of communication, the players of a classical complete information game initially share an entangled \textit{pure} quantum state. Each player may perform arbitrary local operations on his own qubits (by using the state as input to an arbitrary quantum circuit) in order to determine which move to play, but no direct communication between players is allowed. An appealing point of our model is that it has a simple theoretical implementation: assuming that players have access to disjoint qubits from an appropriate quantum state, we do not need any interaction between the players (in the form of a cryptographic protocol) or any communication with a trusted mediator. 

In our framework, we define the concept of \textit{quantum correlated equilibrium} (QCE) for both normal and extensive form games of complete information. We show that in a normal form game, any outcome distribution implementable by a QCE can also be implemented by a classical CE. We prove that the converse is surprisingly false: We give an example of an outcome distribution of a normal form game which is implementable by a CE, yet we prove that in any attempted quantum protocol achieving this distribution, at least one of the players will have incentive to deviate.

We extend our analysis to extensive form games, and find that the relation between classical and quantum correlated equilibria becomes less clear. We compare the outcome distributions implementable in our quantum model to those implementable by a classical \textit{extensive form correlated equilibrium} (EFCE) (see von Stengel and Forges \cite{vonstengel}).\footnote{Since we are only concerned with games of complete information, we can avoid many of the technicalities from \cite{forges} and \cite{vonstengel}.} For example, we show that there exists an extensive form complete information game and a distribution of outcomes which can be implemented by a QCE but not by any EFCE, in contrast to the result for normal form games. We also consider the concept of an \textit{immediate-revelation extensive form correlated equilibrium} (IR-EFCE) (motivated by discussion in Forges \cite{forges2}) and compare the power of IR-EFCE to EFCE and to QCE.

\subsection{Related Work}

While work by Clauser et al \cite{chsh}, Cleve et al \cite{cleve} and La Mura \cite{lamura} have studied how quantum entanglement can aid in games of incomplete information (such as Bayesian games), we restrict our attention to games of complete information, and find that even in this framework the questions are nontrivial. Quantum solutions of classical coordination games have been studied previously, such as in Cleve et al \cite{cleve} and Huberman et al \cite{huberman}. In this paper, we look at games which have both cooperative and competitive components. Instead of analyzing the ``quantization'' of games (see Meyer \cite{meyer}), our underlying games remain purely classical. Entanglement is used only as a device to aid in a player's decision of which strategy to play in the classical game. By keeping the underlying game classical, our model generalizes naturally from normal form to extensive form games.

Since our goal is to study a mediator-free setting, it is necessary to restrict our model so that the initial shared state be pure (See Appendix~\ref{purestates}). This restriction is very significant and differs from work such as Zhang's \cite{shengyu} which, while studying both pure and mixed initial states, limited its mention of pure states to those with a certain restricted form.\footnote{Roughly speaking, the main difference is that we allow for pure states with many ancillary qubits, and these ancillary qubits can indeed affect the players' ability to gain utility by deviating.} Furthermore, unlike La Mura's model \cite{lamura}, in our definition of equilibrium we do not restrict the local operations that a player might potentially perform to his own qubits. 


\section{Classical Correlated Equilibria}

We briefly discuss the concept of correlated equilibrium in classical complete information games, and elaborate on this concept in Appendix~\ref{normalform}. For a more thorough discussion, see \cite{aumann}, \cite{forges}, or \cite{vonstengel}. 

\subsection{Normal Form Games}

Correlated equilibrium (CE) in normal form games was first introduced by Aumann \cite{aumann}. In a correlated equilibrium of a normal form game, a trusted \textit{correlating device} selects an outcome of the game according to some known probability distribution, and privately suggests to each player the appropriate action to achieve this outcome. The resulting play is a CE if each player can maximize his expected utility by always following his recommendation, given that all other players follow their recommendations. See Appendix~\ref{normalform} for further discussion and an example, or see \cite{aumann} for formal definitions.

\subsection{Extensive Form Games}

We informally present the concept of classical correlated equilibrium in extensive form games of complete (but imperfect) information, following \cite{vonstengel}. A more thorough discussion can be found in \cite{forges} and \cite{vonstengel}. Note that there are several different ways of defining correlated equilibria in extensive form games, and in this section we present two such versions.

An extensive form game $G$ has a finite set of players, $n$. The game is represented as a rooted directed tree, where the non-terminal nodes are partitioned into information sets. Each information set belongs to a single player.\footnote{Throughout this paper we will assume that the game has the perfect recall property.} A \textit{pure strategy} for player $i$ selects a single outgoing edge from every information set belonging to $i$. Denote the set of pure strategies available to player $i$ by $\Sigma_i$.

A \textit{correlating device} $\mu$ is a distribution over $\prod_{i \in n} \Sigma_i$. Consider the following procedure:
\begin{itemize}
	\item A trusted mediator draws a strategy profile $\pi = (\pi_1, \pi_2, \ldots, \pi_n)$ according to the correlating device $\mu$.
	\item The players begin playing the game $G$. As the gameplay enters each information set, the mediator tells the set's owner $i$ the recommended move according to $\pi_i$. 
\end{itemize}

Following \cite{vonstengel}, we say that $C = (G, \mu)$ is an \textit{extensive form correlated equilibrium} (EFCE) if, for every player $i$, given that all other players follow their recommended move, player $i$'s expected utility is maximized by always following his recommendation.\footnote{We do not impose any requirement of subgame perfection in our equilibrium definition.}

In the above protocol, the strategy profile $\pi$ defines a suggested move at every information set. This recommendation is revealed only to the set's owner, and is only revealed when he reaches the set. In addition to the definition from \cite{vonstengel}, we give an alternate definition of extensive form correlated equilibrium (briefly mentioned in \cite{forges2}), which we will call \textit{immediate-revelation extensive form correlated equilibrium} (IR-EFCE) defined analogously to that above except where player $i$ learns his entire strategy recommendation $\pi_i$ before gameplay begins. We compare the various classical correlated equilibrium concepts in Appendix~\ref{comparison}. 

\section{Quantum Correlated Equilibria in Normal Form Games}\label{QCE}

In this section we discuss the concept of a quantum correlated equilibrium (QCE) in normal form games.

\begin{mydef}
Let $G$ be a normal form game with $n$ players. For each player $i$, let $A_i$ be the set of actions available to player $i$ in $G$. Consider a 3-tuple $(\ket{\psi}, \Gamma, Q)$ where
\begin{itemize}
	\item $\ket{\psi}$ is pure quantum state.
	\item $\Gamma$ is a partition of the qubits of $\ket{\psi}$ into $n$ disjoint sets $q_1, q_2, \ldots, q_n$.
	\item $Q = (Q_1, \ldots, Q_n)$ is a collection of $n$ quantum circuits, where circuit $Q_i$ takes as input the qubits $q_i$ (as well as auxiliary $\ket{0}$ qubits) and outputs an action $a_i \in A_i$. 
\end{itemize}
Given such a 3-tuple, we denote $D(\ket{\psi}, \Gamma, Q)$ as the distribution resulting over outcomes of $G$ when each player $i$ applies $Q_i$ to his qubits of $\ket{\psi}$ and plays the result, and let $u_i(D)$ be the expected utility for player $i$ in the outcome distribution $D$. 

We say that $(\ket{\psi}, \Gamma, Q)$ is a \textbf{quantum correlated equilibrium} (QCE) if, for all players $i$ and for all quantum circuits $Q_i'$
$$u_i(D(\ket{\psi}, \Gamma, Q)) \geq u_i\left(D\left(\ket{\psi}, \Gamma, (Q_1, \ldots, Q_{i-1},Q_i',Q_{i+1},\ldots,Q_n)\right)\right).$$
\end{mydef}

In a quantum correlated equilibrium, each player can maximize his expected utility by using his prescribed quantum circuit on his qubits and playing the result, given that all other players follow the output of their circuits. In our definition, $\ket{\psi}$ must be a pure quantum state. We believe that restricting $\ket{\psi}$ to be pure is the natural definition for our purpose. In particular, since our goal is to have a mediator-free setting, allowing for a mixed state would create a fundamental difficulty of how to construct the initial state in a secure manner. See Appendix~\ref{purestates} for a detailed discussion of this restriction.

It is a standard result from quantum computation that, given any quantum circuit $Q_i$, there exists an equivalent quantum circuit $Q_i'$ which performs all measurements at the very end of the computation.\footnote{This is sometimes known as the ``principle of deferred measurement.''} Let $(\ket{\psi}, \Gamma, Q)$ be a QCE of a normal form game. We can assume without loss of generality that every quantum circuit $(Q_1, \ldots, Q_n)$ performs all of its measurements at the end of the circuit's computation. Consider the state $\ket{\psi'}$ which results immediately before any circuit performs a measurement but after all unitary transformations have taken place. (By the ``no-communication theorem'' of quantum mechanics, the final state does not depend on the particular order in which the circuits act, since each circuit acts on separate qubits.) We can assume without loss of generality that the action $a_i \in A_i$ output by $Q_i$ is obtained by measuring the first $\log_2{|A_i|}$ bits of player's $i$ partition of $\ket{\psi'}$ in the standard basis (where we have a canonical mapping between $\log_2{|A_i|}$-bit binary strings and elements of $A_i$.)

\begin{mydef}
	Let $G$ be a normal form game with $n$ players. Let $\ket{\psi}$ be a  pure quantum state, and let $\Gamma$ be a partition of the qubits of $\ket{\psi}$ into $n$ sets $q_1, \ldots, q_n$. For each player $i$, let $A_i$ be the set of actions available to player $i$ in $G$, and fix some mapping between binary strings of length $\log_2|A_i|$ and elements of $A_i$. Let $M_i : q_i \rightarrow A_i$ be the circuit which measures the first $\log_2|A_i|$ qubits of $q_i$ in the standard basis and outputs the resulting action in $A_i$ (using the fixed mapping between strings and actions). If $(\ket{\psi}, \Gamma, (M_1, \ldots, M_n))$ is a QCE, we call  $(\ket{\psi}, \Gamma, (M_1, \ldots, M_n))$ a \textbf{canonical implementation QCE}.
\end{mydef}

From the above discussion, we know that any QCE in a normal form game has an equivalent canonical implementation, by letting $\ket{\psi'}$ be the quantum state which occurs immediately before any measurements occur and after all unitary operations are performed.

\begin{lem}\label{canonical}
Let $G$ be a normal form game, and let $(\ket{\psi}, \Gamma, Q)$ be a QCE. Then there exists a canonical implementation QCE $(\ket{\psi'}, \Gamma', (M_1, \ldots, M_n))$ such that $$D(\ket{\psi}, \Gamma, Q) = D(\ket{\psi'}, \Gamma', (M_1, \ldots, M_n)).$$
\end{lem}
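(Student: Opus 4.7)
The plan is to argue this in three stages: apply deferred measurement circuit by circuit, absorb the resulting unitaries into the shared state, and then check that the equilibrium condition survives. First, for each $i$, invoke the principle of deferred measurement (stated already in the paragraph preceding Definition~2) to replace $Q_i$ by an equivalent circuit of the form $M_i \circ U_i$, where $U_i$ is a unitary acting on $q_i$ together with some ancillary register $r_i$ initialized to $\ket{0\cdots 0}$, and $M_i$ is a standard-basis measurement of the first $\log_2 |A_i|$ qubits of the output. The circuits act on disjoint qubits, so by the no-communication theorem their order is irrelevant.

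Next I would define $\ket{\psi'} = (U_1 \otimes \cdots \otimes U_n)\bigl(\ket{\psi}\otimes\ket{0\cdots 0}\bigr)$, where the ancillary blocks $r_1,\ldots,r_n$ are appended to the respective $q_i$'s to form a new partition $\Gamma'=(q_1\cup r_1,\ldots,q_n\cup r_n)$. Arrange the qubit ordering so that the designated output qubits of $U_i$ appear first within $q_i\cup r_i$, so that $M_i$ is exactly the canonical standard-basis measurement of the first $\log_2|A_i|$ qubits. By construction, measuring $\ket{\psi'}$ with $(M_1,\ldots,M_n)$ yields the same joint distribution over action profiles as running $(Q_1,\ldots,Q_n)$ on $(\ket{\psi},\Gamma)$, giving $D(\ket{\psi},\Gamma,Q)=D(\ket{\psi'},\Gamma',(M_1,\ldots,M_n))$.

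The main obstacle is verifying that $(\ket{\psi'},\Gamma',(M_1,\ldots,M_n))$ is itself a QCE, i.e.\ that no player can profit by replacing $M_i$ with an arbitrary quantum circuit $M_i'$ on $q_i\cup r_i$ (plus fresh ancillas). The natural attack is to exhibit a matching deviation in the original system: let $Q_i'$ be the circuit that first introduces a $\ket{0\cdots 0}$ register on $r_i$, then applies $U_i$ to $q_i\cup r_i$, and finally runs $M_i'$. Since the other players use $Q_j = M_j\circ U_j$ in the original system, applying $(Q_1,\ldots,Q_{i-1},Q_i',Q_{i+1},\ldots,Q_n)$ to $(\ket{\psi},\Gamma)$ produces exactly the same intermediate pre-measurement state as applying $(M_1,\ldots,M_{i-1},M_i',M_{i+1},\ldots,M_n)$ to $(\ket{\psi'},\Gamma')$, hence the same outcome distribution. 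The QCE hypothesis on $(\ket{\psi},\Gamma,Q)$ therefore yields
\[
u_i\!\left(D(\ket{\psi'},\Gamma',(M_1,\ldots,M_i',\ldots,M_n))\right) = u_i\!\left(D(\ket{\psi},\Gamma,(Q_1,\ldots,Q_i',\ldots,Q_n))\right) \le u_i\!\left(D(\ket{\psi},\Gamma,Q)\right),
\]
and the right-hand side equals $u_i(D(\ket{\psi'},\Gamma',(M_1,\ldots,M_n)))$ by the previous paragraph. This establishes the canonical implementation property and completes the proof.
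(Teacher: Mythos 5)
Your proof is correct and follows essentially the same route as the paper, which establishes this lemma via the discussion preceding it: apply deferred measurement to each $Q_i$, take $\ket{\psi'}$ to be the pre-measurement state after all unitaries have acted (using the no-communication theorem to justify that the order is irrelevant), and read off the canonical measurements. Your third paragraph, which pulls a deviation $M_i'$ in the canonical system back to a deviation $Q_i' = M_i' \circ U_i$ in the original system to verify the equilibrium condition, makes explicit a step the paper leaves implicit in the word ``equivalent,'' and is a welcome addition.
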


Since every QCE in a normal form game has an equivalent canonical implementation QCE, it follows that any outcome distribution of a QCE in a normal form game can be achieved by a classical correlated equilibrium of the same game. We will see later that the analogous result is false for extensive form games.

\begin{thm}\label{normalformthm}
Let $G$ be a normal form game, and let $(\ket{\psi}, \Gamma, Q)$ be a quantum correlated equilibrium of $G$. Then there exists a classical correlated equilibrium of $G$ which induces the same outcome distribution $D(\ket{\psi}, \Gamma, Q)$.
\end{thm}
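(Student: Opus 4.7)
The plan is to reduce to the canonical case via Lemma~\ref{canonical}, and then show directly that the outcome distribution of a canonical implementation QCE, viewed as a joint distribution over action profiles, is itself a classical correlating device satisfying the CE incentive constraints.

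First I would invoke Lemma~\ref{canonical} to replace $(\ket{\psi}, \Gamma, Q)$ with an equivalent canonical implementation QCE $(\ket{\psi'}, \Gamma', (M_1, \ldots, M_n))$ that induces the same outcome distribution $\mu := D(\ket{\psi}, \Gamma, Q)$. By construction, each $M_i$ simply measures the first $\log_2|A_i|$ qubits of its partition in the standard basis, so $\mu$ is literally the joint distribution obtained by performing $n$ standard-basis measurements on disjoint subsets of qubits of the pure state $\ket{\psi'}$. I would then propose to use $\mu$ itself as the correlating device for a classical CE of $G$ and verify the incentive condition.

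Next, to verify the incentive condition, I would suppose toward contradiction that some player $i$ has a (without loss of generality deterministic) classical deviation $\sigma_i : A_i \to A_i$ such that drawing $(a_1,\dots,a_n) \sim \mu$ and having player $i$ play $\sigma_i(a_i)$ (while others play $a_j$) yields strictly higher expected utility for $i$ than drawing $a_i$ and playing it. The key observation is that this deviation can be simulated quantumly: let $Q_i'$ be the circuit that first runs $M_i$ to obtain $a_i$, then classically computes and outputs $\sigma_i(a_i)$. Because $Q_i'$ acts only on the qubits in $q_i$ (possibly together with fresh $\ket{0}$ ancillas), and because measurements on disjoint qubits commute (the no-communication property used earlier in the paper), the joint output distribution of $(\ket{\psi'}, \Gamma', (M_1, \ldots, M_{i-1}, Q_i', M_{i+1}, \ldots, M_n))$ is exactly the pushforward of $\mu$ by $(a_1,\dots,a_n) \mapsto (a_1,\dots,\sigma_i(a_i),\dots,a_n)$. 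Thus the supposed classical profitable deviation $\sigma_i$ is realized by the quantum deviation $Q_i'$, contradicting the QCE property of $(\ket{\psi'}, \Gamma', (M_1, \ldots, M_n))$.

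I do not expect any serious obstacle here: Lemma~\ref{canonical} does essentially all of the quantum-mechanical work, and once we are in the canonical setting the argument is just the observation that ``measure, then remap the outcome'' is a legal quantum deviation, so the QCE inequality specialized to such circuits is exactly the classical CE inequality. The only point to be slightly careful about is that it is enough to rule out deterministic deviation functions $\sigma_i$, which is standard since any randomized deviation is a convex combination of deterministic ones and hence cannot beat all of them simultaneously.
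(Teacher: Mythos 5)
Your proposal is correct and follows essentially the same route as the paper's proof: reduce to a canonical implementation via Lemma~\ref{canonical}, take the standard-basis measurement distribution as the classical correlating device, and observe that any profitable classical deviation could be simulated by a ``measure, then remap'' quantum circuit $Q_i'$, contradicting the QCE property. The paper compresses this last step into one sentence; your explicit construction of $Q_i'$ and the reduction to deterministic deviations simply spell out the same argument in more detail.
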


\begin{proof}
By Lemma~\ref{canonical}, there exists a canonical implementation QCE $(\ket{\psi'}, \Gamma', (M_1, \ldots, M_n))$ which induces the output distribution $D(\ket{\psi}, \Gamma, Q)$. Let $P$ be the probability distribution on binary strings resulting from measuring all of the qubits of $\ket{\psi'}$ in the standard basis, and consider the classical correlating device which chooses a binary string according to $P$ and tells each player the move suggestion corresponding to the first $\log_2{|A_i|}$ bits of his partition of this binary string. If each player indeed follows the advice, it obviously induces the outcome distribution $D(\ket{\psi}, \Gamma, Q)$. Furthermore, each player can maximize his utility by following his advice: If on the contrary player $i$ could improve his expected utility by not following his suggested move in this classical setting, then we could design a quantum circuit $Q_i$ for $i$ which improves his utility over $u_i(D(\ket{\psi}, \Gamma, Q))$ by deviating in a similar way, thereby violating the assumption that $(\ket{\psi}, \Gamma, Q)$ is a QCE.
\end{proof}

As a simple example of a QCE (similar to an example from \cite{huberman}), consider the normal form game in Figure~\ref{simple2}. The outcome distribution $\frac{1}{2}(TR+BL)$ is achievable by a classical correlated equilibrium. Furthermore, we can achieve this outcome distribution in a QCE by using the entangled state $\frac{1}{\sqrt{2}}(\ket{0}\ket{1} + \ket{1}\ket{0})$ in a canonical QCE representation. In this QCE, the first player measures the first qubit of the pair to determine his move, and the other player measures the second qubit.\footnote{We assume a canonical mapping between binary values and moves in the game, where a measurement value of ``0'' in the first qubit corresponds to the move ``T'', etc.} It is obvious that no player can improve his utility by using a different quantum circuit to manipulate his qubit, since in this outcome distribution each player is always best-responding to the other player's action.

\begin{figure}[htb]
\begin{center}
\begin{game}{2}{2}
	& $L$	& $R$ \\
$T$	&$0,0$	&$1,5$\\
$B$	&$5,1$	&$0,0$
\end{game}
\caption{The distribution $\frac{1}{2}(TR+BL)$ has a canonical QCE with state $\frac{1}{\sqrt{2}}(\ket{0}\ket{1} + \ket{1}\ket{0})$.}
\label{simple2}
\end{center}
\end{figure}

We now ask whether the converse of Theorem~\ref{normalformthm} is true. Consider the game in Figure~\ref{hard}. It is easy to check that $\frac{1}{3}(TR+BL+BR)$ is the outcome of a classical correlated equilibrium. To implement this distribution, we might try having the players share the entangled state $\frac{1}{\sqrt{3}}(\ket{0}\ket{1} + \ket{1}\ket{0} + \ket{1}\ket{1})$ (where the first qubit belongs to the row player, and the second qubit belongs to the column player) and instructing each player to measure his qubit in the standard basis to determine his action. However, this is not a QCE. For example, the row player can apply a Hadamard transformation to his qubit, resulting in the entangled state
$$\frac{2}{\sqrt{6}}\ket{01} + \frac{1}{\sqrt{6}}\ket{00}-\frac{1}{\sqrt{6}}\ket{10}$$
before the measurements. Given that the column player indeed obeys the protocol and simply measures in the standard basis, the resulting outcome distribution is $\frac{2}{3}TR + \frac{1}{6}TL + \frac{1}{6}BL$, which increases the expected utility for the row player.\footnote{While in this example the column player's utility also increases when the row player deviates, if we consider changing the payoff of $TL$ to be $(0,-20)$, then the column player suffers significant losses when the row player deviates in this proposed implementation.} Since the row player can use a Hadamard transformation to increase his expected utility, the state $\frac{1}{\sqrt{3}}(\ket{01} + \ket{10} + \ket{11})$ does not form a canonical QCE implementation of the outcome distribution $\frac{1}{3}(TR+BL+BR)$.

\begin{figure}[htb]
\begin{center}
\begin{game}{2}{2}
	& $L$	& $R$ \\
$T$	&$0,0$	&$6,6$\\
$B$	&$6,6$	&$0,0$
\end{game}
\end{center}
\caption{The CE outcome distribution $\frac{1}{3}(TR+BL+BR)$ cannot be implemented by any QCE. See Appendix~\ref{qceappendix} for a proof of this result.}
\label{hard}
\end{figure}

While the obvious QCE implementation attempt failed, we could conceivably try to design a more complicated QCE protocol achieving this outcome distribution.\footnote{There are examples for which the ``obvious'' approach of achieving a desired distribution fails, but sharing a larger quantum state achieves the distribution in QCE. For example, if we were to change the column player's payoff to always be 0 in the game from Figure~\ref{hard}, we could achieve the $\frac{1}{3}(TR+BL+BR)$ outcome distribution by using an initial 3-qubit shared state, where the last 2 qubits belong to the column player.} In the most technical result of this paper, we prove that there is in fact no QCE achieving this outcome distribution, and thus classical CE is a strictly more powerful concept than QCE in normal form games. The proof of this result is in Appendix~\ref{qceappendix}.

\begin{thm}
There exists a normal form game $G$ and a classical correlated equilibrium distribution of $G$ such that the distribution cannot be achieved by any QCE.
\end{thm}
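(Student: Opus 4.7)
The plan is to invoke Lemma~\ref{canonical} and work with a canonical implementation QCE $(\ket{\psi}, \Gamma, (M_1, M_2))$, then extract structural constraints on $\ket{\psi}$ that cannot all hold simultaneously. Write
$$\ket{\psi} = \sum_{x,y \in \{0,1\}} \ket{x}_{q_R}\ket{y}_{q_C}\ket{\phi_{xy}},$$
with each $\ket{\phi_{xy}}$ living in the ancillary register $H_A \otimes H_B$ (Row's and Column's ancillas). The target outcome distribution forces $\ket{\phi_{00}} = 0$ and $\|\phi_{01}\|^2 = \|\phi_{10}\|^2 = \|\phi_{11}\|^2 = 1/3$.

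Next I would characterize when each player's canonical measurement is a best response. Because both players have identical payoffs, each wants to maximize $P(TR)+P(BL)$. Any quantum circuit Row can run realizes some two-outcome POVM $\{N_T, I-N_T\}$ on his qubits, and his expected utility becomes a linear functional $\text{Tr}[N_T\Delta_R]$ for an explicit Hermitian operator $\Delta_R = \tilde M_R - \tilde M_L$, where $\tilde M_y = \text{Tr}_C[(I_R \otimes M_y)\ket{\psi}\bra{\psi}]$. A standard variational argument shows that Row's canonical projector $|0\rangle\langle 0|_{q_R} \otimes I_A$ is a maximizer of this functional iff $\Delta_R$ is block-diagonal in the $q_R$ basis with a PSD $\ket{0}$-block and an NSD $\ket{1}$-block. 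Computing the blocks from the decomposition above, and using that each nonzero $\rho_A^{(xy)} := \text{Tr}_B \ket{\phi_{xy}}\bra{\phi_{xy}}$ has trace $1/3$ (so the NSD inequality is forced to an equality), Row's optimality conditions become
$$\text{Tr}_B\ket{\phi_{01}}\bra{\phi_{11}} = 0 \quad \text{and} \quad \rho_A^{(10)} = \rho_A^{(11)},$$
and by symmetry Column's conditions become $\text{Tr}_A\ket{\phi_{10}}\bra{\phi_{11}} = 0$ and $\rho_B^{(01)} = \rho_B^{(11)}$.

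The contradiction comes from combining Row's off-diagonal condition with Column's reduced-density-matrix condition via Uhlmann's theorem: since $\ket{\phi_{01}}$ and $\ket{\phi_{11}}$ are purifications of the same mixed state on $H_B$, there exists a unitary $V$ on $H_A$ with $\ket{\phi_{01}} = (V \otimes I_B)\ket{\phi_{11}}$, and then
$$\text{Tr}_B \ket{\phi_{01}}\bra{\phi_{11}} = V \cdot \text{Tr}_B \ket{\phi_{11}}\bra{\phi_{11}} = V\rho_A^{(11)},$$
which vanishes only if $\rho_A^{(11)} = 0$, contradicting $\|\phi_{11}\|^2 = 1/3$. Hence for every canonical QCE purporting to produce the target distribution at least one player's canonical measurement fails one of these optimality conditions and thus admits a strictly profitable deviation; by Lemma~\ref{canonical} this rules out all QCEs achieving the distribution. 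The main obstacle is setting up the variational characterization cleanly---in particular verifying that failure of either the block-diagonality or the operator-inequality condition produces a strictly improving POVM (for Row a small rotation mixing the $\ket{0}_{q_R}$ and $\ket{1}_{q_R}$ sectors, for Column a rank-one perturbation in the positive eigenspace of the traceless-nonzero $\rho_B^{(11)} - \rho_B^{(01)}$)---after which the Uhlmann step is immediate.
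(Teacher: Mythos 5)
Your proposal is correct, and its skeleton coincides with the paper's: reduce to a canonical implementation via Lemma~\ref{canonical}, extract necessary conditions on the conditional reduced states from each player's best-response requirement, and show these conditions contradict the normalization. Indeed, your four conditions $\mathrm{Tr}_B\ket{\phi_{01}}\bra{\phi_{11}}=0$, $\rho_A^{(10)}=\rho_A^{(11)}$, and their column-player mirrors are exactly the paper's constraints $AC^\dagger=0$, $BB^\dagger=CC^\dagger$, $B^\dagger C=0$, $A^\dagger A=C^\dagger C$ on the coefficient matrices, and like the paper you only need $AC^\dagger=0$, $A^\dagger A=C^\dagger C$, and $\mathrm{Tr}(CC^\dagger)=1/3$ for the endgame. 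You diverge in the two technical steps, and both of your substitutes are valid. For the constraints, the paper reduces Row's deviation problem to minimum-error discrimination of $\rho$ versus $\sigma$ and invokes the Helstrom bound $P_E^{\min}=\tfrac12(1-\mathrm{Tr}|\eta_2\rho-\eta_1\sigma|)$, then argues about singular values and a $2\times 2$ principal minor; your variational characterization of when the canonical projector maximizes $\mathrm{Tr}[N_T\Delta_R]$ over $0\le N_T\le I$ is more elementary and avoids the external citation, and you correctly identify the one direction that needs care (that a nonzero off-diagonal block, or a $\ket{1}$-block that is not NSD, yields a strictly better projector); the small-rotation perturbation you sketch does close that gap, and the trace argument forcing the NSD inequality to an equality is the clean way to get $\rho_A^{(10)}=\rho_A^{(11)}$. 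For the contradiction, the paper computes $(CC^\dagger)^2=C(C^\dagger C)C^\dagger=C(A^\dagger A)C^\dagger=(CA^\dagger)(AC^\dagger)=0$ and concludes $CC^\dagger=0$; your Uhlmann step writes $A=VC$ for a unitary $V$ and gets $CC^\dagger=V^\dagger AC^\dagger=0$ directly. These are the same fact in different clothing (the Uhlmann unitary is precisely the polar isometry relating $A$ and $C$ when $A^\dagger A=C^\dagger C$), but your version is arguably more conceptual and skips the eigenvector argument the paper needs to pass from $(CC^\dagger)^2=0$ to $CC^\dagger=0$.
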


\section{QCE in Extensive Form Games}

We define a quantum correlated equilibrium in perfect-recall extensive form games analogously to our definition for normal form games. In a QCE, the qubits of a pure quantum state are partitioned and given to the players of a game. A QCE consists of a quantum circuit for each information set. When an information set is reached during the the game, the information set's owner uses all qubits in his possession\footnote{Players have access to an arbitrarily large supply of ancillary $\ket{0}$ qubits.} as input to the appropriate circuit to determine his next action.\footnote{We now care not only about the action output by the circuit, but also about the resulting quantum state, since the player will use this state in later information sets.} In a QCE, no player can improve his expected utility by changing any number of the circuits on his own information sets.

In our definition, the players share the entangled state $\ket{\psi}$ at the start of the game, and do not gain access to any additional entangled qubits as play progresses. This framework is analogous to a classical IR-EFCE. We believe that our definition is natural, since it avoids the necessity of a mechanism to distribute new entangled states in later information sets.

We show in Appendix~\ref{comparison} that every IR-EFCE in an extensive form game $G$ has a corresponding classical CE in the normal form equivalent $n(G)$. Furthermore, Theorem~\ref{normalformthm} states that every QCE of $n(G)$ has an equivalent classical CE in $n(G)$. Nevertheless, it is possible that $G$ has outcome distributions which can be achieved by a QCE but which cannot be achieved by any classical IR-EFCE (or EFCE).

The underlying reason why quantum correlated equilibrium can be more powerful in an extensive form game $G$ than in $n(G)$ is the measurement principle of quantum mechanics. In particular, an action in $n(G)$ specifies a choice of action for \textit{every} information set of $G$, even those information sets which are not reached in the actual execution. To specify our actions in all of these information sets for the game $n(G)$, we would need to operate on $\ket{\psi}$ many times to determine what we would hypothetically do in all of these unreached information sets. In the extensive form game, a player only operates on $\ket{\psi}$ when his information set is actually reached.\footnote{Because the operations performed to $\ket{\psi}$ depend on the information sets visited during execution of the game, we do not have a concept analogous to a ``canonical implementation'' of an extensive form QCE.}

\subsection{The Complete-Information GHZ Game}

The analog of Theorem~\ref{normalformthm} is false for extensive form games. In particular, we have the following result.

\begin{thm}\label{ghztheorem}
There exists a complete information extensive form game $G$ and an outcome distribution of the game which can be achieved by a QCE but not by any EFCE (or by any IR-EFCE).
\end{thm}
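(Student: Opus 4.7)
The plan is to encode the three-player GHZ pseudo-telepathy game as an extensive-form game of complete (but imperfect) information, exploiting the well-known quantum advantage: quantum strategies win the GHZ game with probability $1$, while every classical correlated strategy wins with probability at most $3/4$.

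First, I would construct a three-player extensive-form game $G$ as follows. The root is a chance node that picks an input triple $(r_1, r_2, r_3) \in \{0,1\}^3$ uniformly from those with $r_1 \oplus r_2 \oplus r_3 = 0$. Each player $i$ then moves at one of two information sets, one for each possible value of $r_i$, outputting an action $a_i \in \{0,1\}$. The information set structure reveals to player $i$ exactly their own input bit and nothing about the other players' inputs or actions. All players receive common payoff $1$ if the GHZ winning condition $a_1 \oplus a_2 \oplus a_3 = r_1 \vee r_2 \vee r_3$ holds, and $0$ otherwise.

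Second, I would exhibit a QCE that achieves winning probability $1$. Distribute the three qubits of $\frac{1}{\sqrt{2}}(\ket{000} + \ket{111})$, one per player. Each player's circuit at information set $r_i$ is the standard quantum GHZ strategy—a prescribed single-qubit unitary depending on $r_i$, followed by a computational-basis measurement—so that the joint measurement outcome satisfies the GHZ condition with certainty. Because the game is purely cooperative and the prescribed protocol already attains the maximum possible payoff for every player, no unilateral quantum deviation can increase utility, so this triple $(\ket{\psi}, \Gamma, Q)$ is indeed a QCE.

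Third, I would show that no EFCE (and hence no IR-EFCE) induces this outcome distribution. In $G$, each player's private information at any information set consists solely of their own input bit, so every pure strategy $\pi_i \in \Sigma_i$ is just a function $\{0,1\} \to \{0,1\}$. A correlating device $\mu$ is therefore a joint distribution over triples of such functions, and when all players obey their recommendations the induced play is exactly a classical correlated strategy for GHZ: player $i$ outputs $\pi_i(r_i)$. The standard combinatorial bound for the GHZ game shows that no deterministic strategy triple wins more than three of the four input cases, so by convexity every classical correlated strategy wins with probability at most $3/4$. Hence every EFCE of $G$ induces winning probability at most $3/4 < 1$, and the QCE outcome distribution is unattainable; the IR-EFCE case is identical, since revealing each $\pi_i$ up front does not change the outcome distribution when players obey. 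The main obstacle is ensuring this reduction is handled rigorously—one must verify that neither the sequential disclosure of recommendations nor the interleaving with the chance node provides any correlation-exploiting power beyond the upfront draw of $\pi$—but once this equivalence is established, the classical $3/4$ bound for GHZ delivers the separation.
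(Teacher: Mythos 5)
There is a genuine gap, and it sits exactly at the word ``complete information'' in the theorem statement. Your game $G$ begins with a chance node that deals each player a private input bit $r_i$; this is precisely the Harsanyi-style incomplete-information (Bayesian) formulation of the GHZ game, which is the setting the theorem is explicitly designed to exclude. The paper's related-work discussion classifies the CHSH/GHZ-type games as games of incomplete information and contrasts them with the complete-information setting studied here; moreover, the paper's extensive-form model has no chance nodes at all (every non-terminal node belongs to some player's information set), so your construction is not even expressible in the framework in which EFCE and QCE are defined. As written, your argument re-proves the well-known quantum pseudo-telepathy separation for the Bayesian GHZ game rather than the claimed statement.

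The paper's proof closes this gap by replacing nature with a strategic fourth player, Nate, who selects the input triple, and --- crucially --- by giving Nate payoff $1$ minus Alice's payoff. This adversarial incentive is not cosmetic: if you simply promoted nature to an indifferent (or cooperative) fourth player, the impossibility direction would fail, because a correlating device could recommend a uniformly random input to Nate together with matching winning outputs to the other three, and an indifferent Nate would have no reason to deviate, yielding an ``always succeed'' EFCE. With the adversarial payoff, the argument runs through the equilibrium condition rather than through a per-profile counting bound alone: in any EFCE or IR-EFCE all advice is fixed before play, the classical $3/4$ bound guarantees that some input defeats the advised classical strategies, so Nate can secure strictly positive expected utility by mixing over his four actions; hence no EFCE can give Alice, Bob, and Charlie expected utility $1$. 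The QCE, by contrast, lets Nate mix uniformly while the other three share $\ket{\psi_{GHZ}}$ and win with certainty, so Nate is indifferent and the others are at their maximum. Your quantum-strategy and $3/4$-bound ingredients are correct and reusable, but you need the de-randomization of nature into an adversarially incentivized player for the theorem as stated.
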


We present a full proof of this theorem in Appendix~\ref{ghz}. The game that we use is a slight modification of the GHZ game from \cite{ghz}. The GHZ game is a well-known example of a scenario where players can achieve higher utility in a quantum setting than they can achieve classically. While the GHZ game is a three-player game of incomplete information, we construct a ``complete-information GHZ game'' (denoted cGHZ) by introducing a fourth player, who we incentivize to act as ``nature.'' An ``always succeed'' outcome distribution (where the original three players always receive maximum payoff and the nature player receives minimum payoff) can be achieved in a QCE but not in any EFCE or IR-EFCE.

By combining Theorem~\ref{normalformthm} and Theorem~\ref{ghztheorem} we obtain the immediate corollary that extensive form QCE is a more powerful concept than normal form QCE.

\begin{cor}
There exists extensive-form game $G$ and an outcome distribution of the game which can be achieved by a QCE, yet no corresponding outcome distribution can be achieved by a QCE in the normal form equivalent game $n(G)$. 
\end{cor}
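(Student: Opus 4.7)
My plan is to prove the corollary by contradiction, directly chaining the two theorems referenced. Take $G$ to be the complete-information GHZ game cGHZ from Theorem~\ref{ghztheorem}, and let $D$ be the ``always succeed'' outcome distribution on $G$ guaranteed by that theorem: achievable by a QCE of $G$, but not by any IR-EFCE of $G$. Suppose, toward contradiction, that some QCE $(\ket{\psi},\Gamma,Q)$ of the normal form equivalent $n(G)$ induces a distribution over pure strategy profiles whose image under the natural ``play out the tree'' map equals $D$ on the outcomes of $G$. Applying Theorem~\ref{normalformthm} to this QCE produces a classical correlated equilibrium $\mu$ of $n(G)$ inducing exactly the same distribution over pure strategy profiles, and hence the same outcome distribution $D$ on $G$.

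The next step is to convert $\mu$ back into an IR-EFCE of $G$ implementing $D$. The conversion is essentially definitional: in an IR-EFCE the mediator hands each player a complete pure strategy $\pi_i \in \Sigma_i$ before play begins, which is exactly the object $n(G)$ treats as a single action. So I would run the correlating device $\mu$ as an IR-EFCE mediator, sampling $\pi = (\pi_1,\ldots,\pi_n)$ and delivering $\pi_i$ privately to player $i$ prior to gameplay. The induced outcome distribution is $D$. For incentive compatibility, I observe that any deviation available to player $i$ in the IR-EFCE after receiving $\pi_i$ is a choice of some other pure strategy $\pi_i'\in\Sigma_i$ depending on $\pi_i$, which is precisely a deviation available to player $i$ in the normal form game $n(G)$ after receiving the recommendation $\pi_i$; and the expected payoff from $(\pi_1,\ldots,\pi_i',\ldots,\pi_n)$ is the same whether we view the profile as being played in $G$ or listed as an action profile of $n(G)$. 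Therefore the CE incentive constraints for $\mu$ in $n(G)$ coincide with the IR-EFCE incentive constraints in $G$, and $\mu$ is an IR-EFCE of $G$ implementing $D$. This contradicts Theorem~\ref{ghztheorem} and completes the proof.

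The only nonroutine point is the direction CE of $n(G)$ $\Rightarrow$ IR-EFCE of $G$. The appendix explicitly states the reverse direction, but the argument above (identifying pure strategies of $G$ with actions of $n(G)$ and observing that deviations in the two settings are in bijection with matching expected utilities) gives the converse directly, so no new machinery is required.
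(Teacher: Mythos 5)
Your proof is correct and follows essentially the same route as the paper, which obtains the corollary by chaining Theorem~\ref{ghztheorem} with Theorem~\ref{normalformthm} and the CE-of-$n(G)$/IR-EFCE-of-$G$ equivalence. The only nonroutine step you flag --- converting a CE of $n(G)$ into an IR-EFCE of $G$ --- is in fact already asserted in both directions by Claim~\ref{IREFCE=CE} in Appendix~\ref{comparison} (``Furthermore, every correlated equilibrium of $n(G)$ has an equivalent IR-EFCE of $G$''), and your argument for it is the same definitional identification the paper intends.
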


\subsection{Limits of QCE in Extensive Form Games}

While in some extensive form games QCE can be a more powerful solution concept than classical EFCE, there are other games where outcome distributions can be achieved by EFCE but cannot be implemented by any extensive form QCE. Consider the game in Figure~\ref{EFCEnoQCE}. As discussed in Appendix~\ref{comparison}, the outcome distribution  $1/2(IN,a,L) + 1/2(IN,b,R)$ can be achieved by an EFCE but not by any IR-EFCE. For a nearly identical reason, this distribution cannot be achieved by any extensive form QCE.

\begin{figure}[htb]
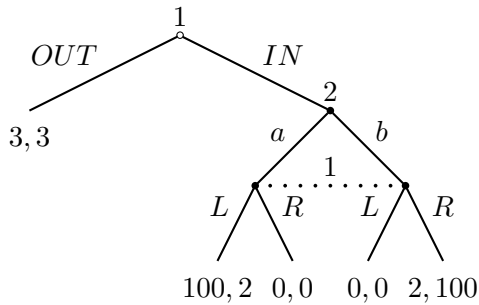
 
\hspace*{\fill} 
\begin{egame}(500,380) 
\putbranch(150,340)(2,1){200} \iib{1}{$OUT$}{$IN$}[$3,3$][] 
\putbranch(350,240)(1,1){100} \iib{2}{$a$}{$b$}
\infoset(250,140){200}{1}
\putbranch(250,140)(1,2){50}\iib{}{$L$}{$R$}[$100,2$][$0,0$]
\putbranch(450,140)(1,2){50}\iib{}{$L$}{$R$}[$0,0$][$2,100$]

\end{egame} 
\hspace*{\fill} \caption[]{The outcome distribution $1/2(IN,a,L) + 1/2(IN,b,R)$ can be achieved by an EFCE but not by any QCE (or by any IR-EFCE).}\label{mygame} 
\label{EFCEnoQCE}
\end{figure} 

Suppose on the contrary that there were some QCE achieving the outcome distribution  $1/2(IN,a,L) + 1/2(IN,b,R)$. Then we notice that, at the beginning of the game, player 1 could simulate the quantum circuit for his second information set to compute whether, if he were to play IN, his next advice would be $a$ or $b$. If he computes that his next advice will be $b$, then he can improve his utility by deviating and playing OUT.

The underlying reason why the outcome distribution discussed above cannot be implemented by a QCE is that, in a QCE, a player has the ability to apply his circuits early, and can thereby compute what his advice would be if he were to reach certain later information sets in the future. Since in this example player 1 would have no further need of his qubits if he were to play OUT (since the game would end immediately), there is no penalty for him to discover what his future advice will be. We have therefore proven the following theorem:

\begin{thm}
There exists an extensive form game $G$ and an outcome distribution of the game which can be achieved by an EFCE but not by any QCE or by any IR-EFCE.
\end{thm}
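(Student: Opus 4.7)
The plan is to argue by contradiction, formalizing the informal sketch preceding the theorem. Suppose $(\ket{\psi}, \Gamma, Q)$ is a QCE of the game $G$ in Figure~\ref{EFCEnoQCE} that realizes the distribution $\tfrac{1}{2}(IN,a,L)+\tfrac{1}{2}(IN,b,R)$, with circuits $Q_1^{(1)}$, $Q_2$, $Q_1^{(2)}$ for the three information sets in gameplay order. By deferred measurement (as used in Lemma~\ref{canonical}) I may assume each circuit consists of a unitary on the relevant player's qubits plus fresh $\ket{0}$ ancillas, followed by a standard-basis measurement of a designated output qubit that yields the action. Since the intended distribution places zero mass on $OUT$, the measurement of $Q_1^{(1)}$ returns $IN$ with probability $1$.

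I will exhibit a strictly profitable deviation for Player~1. His new first-information-set circuit $Q_1'^{(1)}$ applies the unitary parts of $Q_1^{(1)}$ and then of $Q_1^{(2)}$ to his qubits (both are well-defined since each acts only on his own registers), measures the output qubit of $Q_1^{(2)}$ to obtain a value $x \in \{L,R\}$, stores $x$ in a classical register, and outputs $IN$ if $x=L$ and $OUT$ if $x=R$. His new second-information-set circuit $Q_1'^{(2)}$ (which is executed only when $x=L$) simply outputs the stored value $L$.

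The key calculation is that Player~1's qubits are disjoint from Player~2's, so the unitary parts of $Q_1^{(1)}$, $Q_1^{(2)}$, and $Q_2$ pairwise commute, and all measurements may be deferred to the end without changing the joint distribution of the three output qubits. Consequently, the marginal distribution of $x$ in the deviation agrees with the marginal distribution of Player~1's second-information-set output in the intended play, namely $\tfrac{1}{2}L+\tfrac{1}{2}R$. Moreover, conditioned on $x=L$, Player~2's subsequent measurement yields $a$ with probability $1$, because the intended joint distribution is supported on $\{(a,L),(b,R)\}$. Hence the deviation gives Player~1 payoff $100$ whenever $x=L$ (probability $\tfrac{1}{2}$) and payoff $3$ whenever $x=R$ (probability $\tfrac{1}{2}$), for expected utility $\tfrac{1}{2}(100)+\tfrac{1}{2}(3) = 51.5$. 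This strictly exceeds his QCE utility $\tfrac{1}{2}(100)+\tfrac{1}{2}(2) = 51$, contradicting the QCE condition. The corresponding IR-EFCE impossibility is established in Appendix~\ref{comparison}.

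The main obstacle I anticipate is making the reordering of operations rigorous---justifying that Player~1 may ``simulate'' $Q_1^{(2)}$ before Player~2 has acted and still recover the correct marginal and conditional distributions. I will handle this via the standard fact that unitaries on disjoint qubit sets commute, combined with the deferred-measurement principle, so that the whole execution can be viewed as one global unitary applied to $\ket{\psi}$ (together with ancillas) followed by measurements performed in whichever order is convenient.
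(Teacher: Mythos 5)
Your proposal is correct and follows essentially the same route as the paper: player 1 deviates by simulating the circuit for his second information set at the start of the game, learns whether his future advice is $L$ or $R$, and plays $OUT$ when it is $R$, gaining $51.5 > 51$ in expectation; the EFCE achievability and IR-EFCE impossibility are delegated to Appendix~\ref{comparison} exactly as in the paper. Your added justification via deferred measurement and commutativity of operations on disjoint qubit sets is a correct formalization of the step the paper leaves informal.
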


While the above theorem states that in some games the EFCE concept can be more powerful than both QCE and IR-EFCE, it also can be the case that some distributions can be implemented by both an EFCE and by a QCE but not by any IR-EFCE. By combining aspects of the game from Figure~\ref{EFCEnoQCE} with complete-information GHZ game, we have the following result, which we prove in Appendix~\ref{finalproof}.

\begin{thm}\label{quantumefce}
There exists an extensive form game $G$ and an outcome distribution of the game which can be achieved by an EFCE and by a QCE but not by any IR-EFCE.
\end{thm}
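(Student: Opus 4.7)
The plan is to design $G$ by combining the pre-commitment structure of Figure~\ref{EFCEnoQCE} (which is what breaks IR-EFCE) with the entanglement-based correlation of the complete-information GHZ game of Theorem~\ref{ghztheorem} (which provides a middle stage whose outcome cannot be learned in advance by probing one's own qubits). Concretely, player $1$ will have a first information set with actions $OUT$ (paying a moderate fixed amount) and $IN$; taking $IN$ enters a cGHZ-style subgame played among the remaining players, after which player $1$ reaches a second information set at which he must choose a leaf action. The payoff matrix is designed so that exactly one leaf action is optimal on each possible cGHZ outcome, and so that the cGHZ distribution we aim for gives strictly higher expected value to $IN$ than to $OUT$ while, on a positive-probability subevent, the realized leaf payoff under $IN$ is strictly below the $OUT$ payoff.

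I would prove the three clauses of the theorem as follows. For the EFCE clause, the mediator sequentially runs the classical cGHZ protocol and then, after observing the realized cGHZ outcome, reveals to player $1$ at his second information set the leaf action matching that outcome; at the top information set player $1$ sees only ``$IN$'' and compares $OUT$ against the expected payoff of $IN$, which by construction favors $IN$. For the QCE clause, the players use a shared GHZ state (with suitable ancillary qubits) exactly as in the proof of Theorem~\ref{ghztheorem}; player $1$'s top-info-set circuit outputs $IN$ without touching his register, and his second-info-set circuit measures his qubits so that, \emph{given that the other players have already applied their question-dependent unitaries during the middle stage}, the measurement deterministically returns the matching leaf. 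For the failure of IR-EFCE, we recycle the argument from the discussion of Figure~\ref{EFCEnoQCE}: once player $1$ is handed his entire drawn pure strategy in advance, Bayesian inference against the joint distribution tells him which cGHZ event the mediator has drawn, and on the unfavorable event $OUT$ strictly beats $IN$.

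The main obstacle is the QCE incentive-compatibility check at player $1$'s top information set. The natural deviation is for him to preempt his second-info-set circuit on his own qubits at the top information set, hoping to learn whether his eventual leaf advice will be favorable or unfavorable and then to take $OUT$ in the unfavorable case. The key technical fact I would invoke is the maximally-mixed marginal property of the GHZ state: before the other players have applied their question-dependent unitaries (which happens strictly after player $1$'s top information set, since nature has not yet chosen the questions), the reduced density matrix on player $1$'s qubits carries no information about the eventual cGHZ outcome. Hence any circuit player $1$ applies at the top info set produces an output whose distribution is independent of the joint outcome that will later be realized, so no deviation at the top information set yields a conditional posterior different from the prior, and the prior expected payoff of $IN$ strictly exceeds $OUT$ by construction. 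Incentive compatibility at the remaining information sets, and for the cGHZ players themselves, then reduces to the analysis already performed in the proof of Theorem~\ref{ghztheorem}, together with the observation that once the cGHZ outputs are fixed, player $1$'s prescribed leaf measurement is always a best response.
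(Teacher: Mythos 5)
Your construction takes a genuinely different route from the paper's, and it has a gap that I believe is fatal. You place the cGHZ subgame on the $IN$ branch and ask that player 1's second-information-set measurement ``deterministically return the leaf matching the realized cGHZ outcome,'' while simultaneously arguing that probing those same qubits at the top information set reveals nothing because the other players have not yet applied their question-dependent unitaries. These two requirements contradict each other. Player 1's second-info-set circuit is a fixed local operation $C$ on his own qubits; by the no-communication theorem (equivalently, the commutativity of operations on disjoint registers that the paper itself invokes for deferred measurement), the \emph{joint} distribution of (output of $C$, all other players' outputs) is identical whether $C$ is applied before or after the cGHZ players' unitaries. So if $C$ matches the realized cGHZ outcome with probability 1 when run ``on time,'' it matches with probability 1 when run at the top information set, and player 1 can learn the unfavorable event early and deviate to $OUT$ --- exactly the deviation you need to exclude. (The maximally-mixed-marginal observation only shows his \emph{marginal} output distribution is outcome-independent, which, if true, equally prevents the on-time measurement from being correlated with the outcome.) This is the same obstruction the paper uses to show that the Figure~\ref{EFCEnoQCE} distribution is unachievable by QCE: early self-probing can never be prevented in this model.

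The paper's proof sidesteps this entirely by using the GHZ game for a different purpose: not to hide player 1's future advice, but to \emph{devalue his outside option}. The cGHZ subgame sits on the $OUT$ branch with player 1 in the role of Nate; players 3, 4, 5 share $\ket{\psi_{GHZ}}$ and win with certainty, so $OUT$ is worth exactly $0$ to player 1 in the quantum implementation, whereas classically he can guarantee at least $50/4$ there by mixing. Player 1 and player 2 merely share an EPR pair to correlate $a$/$b$ with $L$/$R$. Player 1 \emph{can} probe his EPR qubit early and learn that his leaf advice will be the unfavorable $R$ (worth $2$), but the deviation is still unprofitable since $2 > 0$; in an IR-EFCE the same information makes deviation profitable since $50/4 > 2$. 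If you want to repair your write-up, you should adopt this ``collapse the outside option'' mechanism rather than trying to make future advice quantum-mechanically unknowable, which the deferred-measurement principle rules out.
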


Finally, we note that the normal form game from Appendix~\ref{qceappendix} (viewed as a depth-2 imperfect information extensive form game) provides an example of an outcome distribution that can be implemented by EFCE and by IR-EFCE but not by any QCE.


\begin{figure}[htb]\label{summarydiagram}
\begin{center}
\includegraphics[scale=0.35]{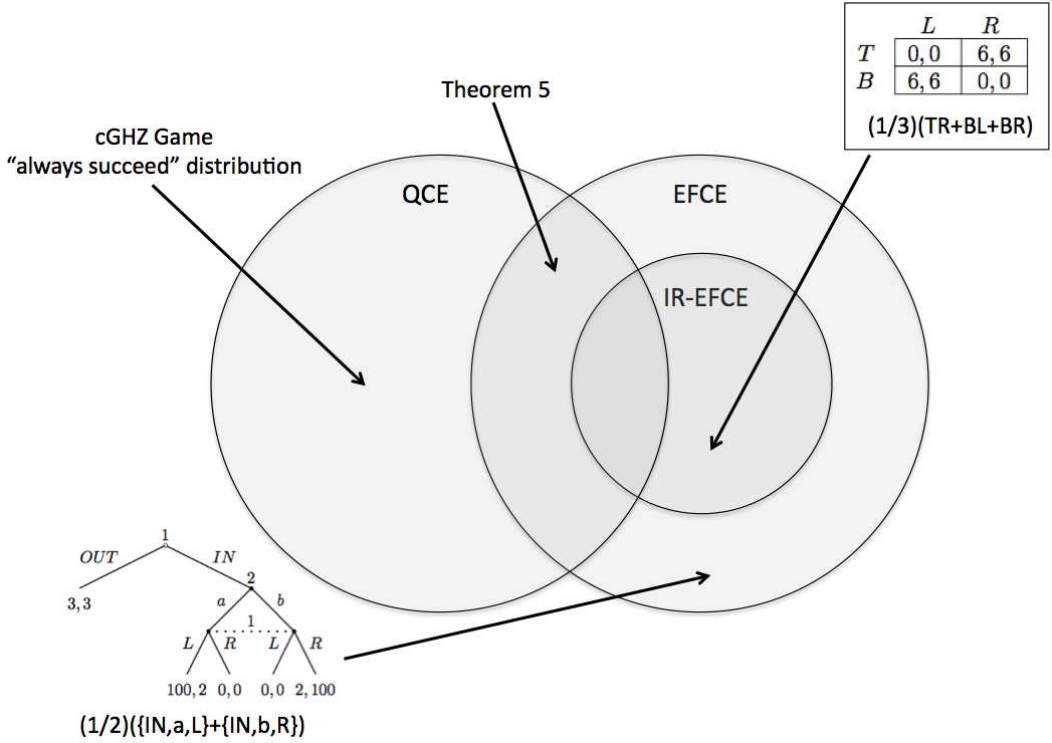}
\caption{Examples of outcome distributions implementable by QCE, EFCE, and IR-EFCE}
\end{center}
\end{figure}

\section{Further Work}

\subsection{Perfect Quantum Implementation of Classical CE}

A potential application of QCE is to use quantum entanglement (and no classical communication) to remove the need for a trusted mediator when implementing a classical correlated equilibrium. While this might not always be possible (since some CE distributions might not have a corresponding QCE- see Appendix~\ref{qceappendix}), we have shown that for many classical CE distributions there indeed exists a QCE which induces the same distribution.

We now ask more precisely what it means for a mechanism to ``implement'' a classical CE. If we were to follow the viewpoint of Dodis, Halevi and Rabin \cite{dodis}, it would suffice to show that our mechanism has an equilibrium which is equivalent to the distribution of the desired correlated equilibrium. The framework of \cite{dodis} matches closely with the analysis we have already performed, since we studied which correlated equilibria have a corresponding QCE with the same outcome distribution.

We can also take a more restrictive view of what it means for a mechanism to ``implement'' a desired correlated equilibrium, in a manner analogous to Izmalkov, Lepinski and Micali \cite{ilm1},  \cite{ilm2}. Izmalkov, Lepinski, and Micali defined the concept of ``perfect implementation'' of a mechanism. Roughly speaking, a perfect implementation preserves not only a single desired equilibrium, but it must preserve all of the strategic properties of the game as well as the privacy of the players. We do not wish to formally define perfect quantum implementation at this time, but we will give a rough outline of some of the properties it should obey.

For simplicity, we will only look at normal-form classical games, and we will continue to view the underlying game as a ``black box'' (avoiding the implementation issues from \cite{ilm1}). Let $D$ be a classical CE distribution of the game $G$, and let $A$ be the trusted classical mediator which suggests actions to the players according to $D$. We denote $G^A$ to be the classical game where each player receives advice from $A$ before deciding on his action. While we do not wish to formalize the notion at this point, we draw motivation from \cite{ilm2}, and impose the requirement that, in order for $(\ket{\psi},\Gamma, \tilde{Q})$ to be a \textit{perfect quantum implementation} of $D$, we must at the very least satisfy
\begin{itemize}
	\item $(\ket{\psi},\Gamma, \tilde{Q})$  is a QCE with outcome distribution $D$.
	\item For each player $i$ there is a mapping $f_i$ from quantum circuits $Q_i$ to strategies in $G^A$ such that, if $(\ket{\psi}, \Gamma, (Q_1, \ldots, Q_n))$ is a QCE with distribution $D'$, then the set of strategies $(f_1(Q_1), \ldots, f_n(Q_n))$ is an equilibrium of $G^A$ with outcome distribution $D'$.
\end{itemize}
We should also enhance our definition to require that (if the game has more than two players) properties such as collusion-resilience of $G^A$ are preserved in our perfect quantum implementation. For example, if two players in $(\ket{\psi},\Gamma, \tilde{Q})$ could collude (perhaps by performing a quantum operation which acts on both of their qubits) then a similar collusion should be possible in $G^A$.

The main idea is that a perfect quantum implementation not only achieves $D$ in QCE, but does not introduce any additional equilibria which would not already exist if the players were given the classical mediator $A$.\footnote{Since we do not introduce any communication between players in our quantum setting, we do not have to deal with issues of ``aborting'' computations as in \cite{ilm1} and \cite{ilm2}.}

Achieving a perfect quantum implementation of a classical CE seems to be a much loftier goal than matching a single desired distribution in equilibrium, and we suspect that in most cases will be impossible. While in some very simple examples we are able to achieve such a perfect implementation, it remains a further question to study to what extent we can achieve a perfect (or some reasonably-defined approximation of perfect) quantum implementation of a classical CE.

\subsection{Other Open Questions}

\begin{enumerate}
	\item What is the computational complexity of computing a QCE in a normal form game? In an extensive form game?
	\item Given an outcome distribution of a classical game which can be achieved by a QCE, is there an efficient method of computing the smallest number of entangled qubits that must be shared in order to achieve this outcome distribution in QCE? If the game is a normal form game, is there an efficient method of determining the smallest  number of qubits needed in a shared state which achieves the outcome distribution in a canonical implementation QCE?
	\item In our model for QCE, the players are allowed to initially share an arbitrary pure quantum state. Which QCE outcome distributions are possible if we only allow the players to initially share an arbitrary number of EPR pairs?\footnote{Using EPR pairs, it is possible for two players to construct an arbitrary shared entangled state using only local operations and classical communication. (See \cite{bennett}.) However, this approach is unsatisfactory in our framework for a two-player game, since it requires classical communication (which changes the underlying classical game) and this protocol is not robust against a dishonest opponent. Notice that if a state $\ket{\psi}$ can be constructed (in an honest setting) by only local operations (with no communication) starting from shared EPR pairs, and if $\ket{\psi}$ can be used in a QCE, then the overall construction will be a QCE, regardless of the robustness of the state-constructing protocol. This follows from reversibility properties of quantum mechanics.}
	\end{enumerate}

\section{Acknowledgements}

I would like to thank Scott Aaronson and Silvio Micali for their advice and for many helpful discussions.

\appendix

\section{Normal Form Correlated Equilibrium}\label{normalform}

In a Nash equilibrium of a normal form game, each player selects a probability distribution over his possible moves, and the resulting distribution on outcomes is the resulting product distribution. In a correlated equilibrium, however, a \textit{correlating device} is used to correlate the random choices made by each player, therefore allowing for a wider variety of outcome distributions.

We consider a canonical representation of correlated equilibria, in which a correlating device suggests a single move to each player. (In a more general framework, the correlating device can provide arbitrary signals, although these two models are equivalent.) The resulting play is a CE if it is optimal for each player to always follow his advice, given that all other players follow their advice.

For example, consider the game in Figure~\ref{game1}. We claim that there exists a CE having outcome distribution $\frac{1}{3}(TR + BL + BR)$. Imagine the correlating device taking three envelopes with contents $TR$, $BL$, and $BR$, choosing an envelope at random, and secretly telling each player his recommended move. For this to be a correlated equilibrium, we must show that each player maximizes his expected utility by always following his advice, given that his opponents always follow their advice. For example, suppose that the column player always follows his advice. If the row player receives advice $T$, then he knows that the column player must have received advice $R$, and therefore the row player maximizes his utility by following his suggestion and playing $T$. If the row player instead receives advice $B$, then he knows that, conditional on his advice, his opponent will be playing $L$ half the time and $R$ half the time. His expected utility of playing $T$ is therefore 3.5, while his expected utility of $B$ is 5. Therefore the row player maximizes his expected utility by playing $B$. An analogous argument shows that, if the row player always follows his advice, then it is optimal for the column player to follow his advice as well.

\begin{figure}[htb]
\begin{center}
\begin{game}{2}{2}
	& $L$	& $R$ \\
$T$	&$0,0$	&$7,10$\\
$B$	&$10,7$	&$0,0$
\end{game}
\end{center}
\caption{The outcome distribution $\frac{1}{3}(TR + BL + BR)$ can be achieved by a classical correlated equilbrium.}
\label{game1}
\end{figure}

Notice that the presence of a trusted correlated device is vital to achieve the equilibrium distribution $\frac{1}{3}(TR + BL + BR)$ in Figure~\ref{game1}. In particular, we cannot achieve this distribution using only a public random string. The underlying reason is that each player must not be able to know the opponent's advice. If, for example, the row player received advice $B$ and could compute with certainty whether the column player had received advice $L$ or $R$, then he could increase his expected utility by playing $T$ whenever the column player received $R$.





\section{Comparison of Classical Correlated Equilibria}\label{comparison}

We state a few results concerning the power of various classical correlated equilibrium concepts. The arguments given below are informal. When we compare equilibrium concepts, we are concerned primarily with the \textit{induced outcome distribution} reached in each equilibrium.\footnote{Thus, an informal statement such as ``an IR-EFCE has an equivalent EFCE'' should be interpreted as meaning that there is an EFCE having the same induced outcome distribution as the IR-EFCE.}

\begin{claim}
In any extensive form game, any IR-EFCE has an equivalent EFCE. However, it is not necessarily the case that every EFCE has an equivalent IR-EFCE.
\end{claim}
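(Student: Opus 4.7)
The plan is to address the two halves of the claim separately: the first by a short information-monotonicity observation, the second by exhibiting the concrete counterexample from Figure~\ref{EFCEnoQCE}.

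For the forward direction, I will take an arbitrary IR-EFCE $(G,\mu)$ and show that the same correlating device $\mu$, reinterpreted under the EFCE revelation protocol, is still an equilibrium and realizes the same outcome distribution. The key observation is that at every decision point, player $i$'s information under EFCE is a coarsening of his information under IR-EFCE: under EFCE he sees only those components of $\pi_i$ corresponding to info sets already visited, whereas under IR-EFCE he sees the entire $\pi_i$ from the start. Hence any behavioral deviation available in the EFCE game (a function of the recommendation history) can be simulated in the IR-EFCE game by ignoring the extra components of $\pi_i$. A profitable EFCE deviation would therefore yield a profitable IR-EFCE deviation, contradicting the hypothesis; and since honest play realizes the same distribution $\mu$ over strategy profiles in both protocols, the induced outcome distribution is preserved.

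For the converse, I will use the game $G$ in Figure~\ref{EFCEnoQCE} together with the distribution $D = \tfrac{1}{2}(IN, a, L) + \tfrac{1}{2}(IN, b, R)$, and argue that $D$ is realizable by an EFCE but not by any IR-EFCE. For the EFCE side, I take the natural correlating device $\mu$ that samples one of the two outcomes in $D$ uniformly and sends each player the corresponding recommendations. At the root, player 1 sees no recommendation history and compares IN-and-follow, with expected payoff $\tfrac{1}{2}(100) + \tfrac{1}{2}(2) = 51$, to OUT, with payoff $3$. At his info set after IN, the recommendation $L$ or $R$ pins down player 2's correlated move, so following earns $100$ or $2$ against $0$ on any deviation. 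Player 2 at his only info set earns $2$ or $100$ by following and only $0$ by deviating, since player 1's second-info-set recommendation is fixed by $\mu$ independently of player 2's actual move.

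To rule out an IR-EFCE inducing $D$, I will use a support argument. Any correlating device $\mu'$ realizing $D$ under honest play must be supported on profiles whose realized play lies in $\{(IN, a, L), (IN, b, R)\}$; in particular, whenever $\mu'$ recommends $R$ to player 1 at his second info set, it must simultaneously recommend $b$ to player 2 (otherwise an outcome outside the target support would occur with positive probability). Because player 1 learns his entire $\pi_1$ at the root under IR-EFCE, conditional on the event ``recommended $R$'' he can infer with certainty that player 2 will play $b$; following IN then yields payoff $2$, while deviating to OUT yields $3$, contradicting the IR-EFCE property. The main delicate step I anticipate is this support argument, but it is easy here because player 1's second info set is reached almost surely under $\mu'$-honest play and the two target outcomes use distinct actions there, so the joint distribution over recommendations is pinned down by the joint distribution over honest-play outcomes.
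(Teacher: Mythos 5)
Your proposal is correct and follows essentially the same route as the paper: the forward direction via the observation that EFCE information is a coarsening of IR-EFCE information (so any profitable EFCE deviation lifts to a profitable IR-EFCE deviation), and the converse via the game of Figure~\ref{EFCEnoQCE} with the distribution $\frac{1}{2}(IN,a,L)+\frac{1}{2}(IN,b,R)$, where player 1's ability to see his future recommendation $R$ at the root makes $OUT$ (payoff $3$) preferable to following (payoff $2$). Your explicit support argument pinning the recommendations to the two profiles $((IN,L),a)$ and $((IN,R),b)$ is a slightly more careful rendering of the paper's informal reasoning, but not a different approach.
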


\begin{proof}

We argue informally that the above claim is true by noting that at any information set in an IR-EFCE, every player knows at least as much as he would know in the corresponding EFCE, and he also knows additional information about the advice he will receive at later information sets. Therefore, if it is never in the player's interest to deviate even if he knows all of the information from the IR-EFCE, it will clearly be impossible for him to increase his expected utility by deviating if he knows even less about the future.

For the reverse direction of the claim, consider the game in Figure~\ref{EFCEnoQCE}. It is easy to check that the outcome distribution $1/2(IN,a,L) + 1/2(IN,b,R)$ is achievable by an EFCE. However, there is no IR-EFCE achieving this distribution. The reason is that, in the IR-EFCE framework, if player 1 receives advice $IN$, he can check immediately whether his advice for his second move will be $L$ or $R$. If his advice will be $R$, he can improve his utility by deviating and playing $OUT$. Notice the EFCE framework, the player does not know if his advice will be $L$ or $R$ until \textit{after} he plays $IN$, and therefore he simply computes that his expected utility of playing $IN$ is 51 which is greater than the utility of 3 he receives by playing $OUT$.
\end{proof}

For any extensive form game $G$, there is a corresponding normal form game, which we will denote $ n(G)$ and call the \textit{normal form equivalent} of $G$. The players of $n(G)$ are the same as the players of $G$, and a pure strategy of player $i$ in $n(G)$ corresponds to a choice of a single move from \textit{all} of $i$'s information sets in $G$. (Note that the size of the game matrix for $n(G)$ might be exponentially larger than the size of the game tree representation of $G$.) We define the payoffs of $n(G)$ according to the corresponding outcome of $G$. We say that an outcome distribution $D$ of $G$ \textit{corresponds} to an outcome distribution $D'$ of $n(G)$ if the probability of any outcome $x$ of $G$ under $D$ is equal to the sum of the probabilities of all corresponding outcomes $x_1, x_2, \ldots$ of $n(G)$ under $D$. (Notice that $x$ may have several corresponding outcomes in $n(G)$, since many moves in $n(G)$ can differ only on information sets which are never reached in the path to $x$ in $G$'s game tree.) We say that equilibria of $G$ and $n(G)$ are \textit{equivalent} if their induced outcome distributions are equivalent.

\begin{claim}\label{IREFCE=CE}
Let $G$ be an extensive form game. Then every IR-EFCE of $G$ has an equivalent correlated equilibrium of $n(G)$. Furthermore, every correlated equilibrium of $n(G)$ has an equivalent IR-EFCE of $G$.
\end{claim}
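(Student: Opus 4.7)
The plan is to exploit the direct correspondence between correlating devices in the two settings. A correlating device $\mu$ for IR-EFCE is by definition a distribution over $\prod_i \Sigma_i$, and a correlating device in the classical CE of $n(G)$ is also a distribution over $\prod_i \Sigma_i$ (since by construction of $n(G)$, the pure strategies of player $i$ in $n(G)$ are exactly the elements of $\Sigma_i$). So there is a natural bijection between correlating devices of the two objects; given $\mu$, I will use the same $\mu$ for both the IR-EFCE of $G$ and the CE of $n(G)$.

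First I would check that, for any fixed strategy profile $\pi=(\pi_1,\ldots,\pi_n)\in\prod_i\Sigma_i$, the outcome of $G$ produced when each player follows $\pi_i$ equals the outcome of $n(G)$ at the strategy profile $\pi$. This is immediate from the definition of $n(G)$, whose payoffs are defined to be those of the induced outcome in $G$. Summing over $\pi$ weighted by $\mu(\pi)$ shows the two induced outcome distributions on $G$ are equivalent, up to merging the (several) outcomes of $n(G)$ that collapse to a single outcome of $G$ along a realized play path.

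Next I would verify the equivalence of the incentive-compatibility conditions. In an IR-EFCE, player $i$ learns $\pi_i$ before play starts and receives no further information during play beyond the public structure of his information sets; combined with perfect recall, this means any deviation he can execute during gameplay is described by some pure strategy $\sigma_i(\pi_i)\in\Sigma_i$, where the choice of $\sigma_i(\pi_i)$ can depend on his observed recommendation $\pi_i$. Hence the set of possible deviation plans for player $i$ in the IR-EFCE is exactly the set of functions $\sigma_i:\Sigma_i\to\Sigma_i$. In a CE of $n(G)$, player $i$ also sees a recommendation drawn from $\Sigma_i$ and may deviate to any pure strategy in $\Sigma_i$ as a function of that recommendation, i.e., also a function $\sigma_i:\Sigma_i\to\Sigma_i$. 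Because the realized payoffs for any joint choice of pure strategies coincide in $G$ and $n(G)$, the expected utility to player $i$ from any deviation function $\sigma_i$ is the same in both settings. Thus the IC constraints ``no $\sigma_i$ beats the identity'' are literally the same system of inequalities on both sides, establishing both directions of the claim simultaneously.

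The main subtlety, and the step I would be most careful with, is the characterization of admissible deviations in the IR-EFCE. One must argue that the player cannot do anything richer than pick a $\sigma_i:\Sigma_i\to\Sigma_i$ at the start of the game; in particular, since no new randomness or private signals arrive mid-play and since perfect recall lets the player reconstruct the observable history from his current information set, every mid-play contingent strategy is realized by some pure strategy in $\Sigma_i$ that he commits to after reading $\pi_i$. Once this is pinned down, the rest of the proof is a bookkeeping identity.
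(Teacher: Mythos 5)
Your proof is correct and takes essentially the same route as the paper, which simply asserts that the claim ``follows easily from the definition of $n(G)$''; you have filled in the details the paper leaves implicit. In particular, your careful point that an IR-EFCE deviation is exactly a map $\sigma_i:\Sigma_i\to\Sigma_i$ (because the recommendation is revealed up front, no further signals arrive, and a pure strategy of $G$ already specifies a move at every information set, including off-path ones) is precisely the content being waved at by the paper's one-line proof.
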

The proof of Claim~\ref{IREFCE=CE} follows easily from the definition of $n(G)$.

\section{Pure States in Quantum Correlated Equilibria}\label{purestates}


In our definition of QCE, the players initially share a pure quantum state $\ket{\psi}$. We do not allow for the players to initially share a mixed state (which would make the converse of Theorem~\ref{normalformthm} obviously true), since the use of mixed states hides information and affects properties such as collusion-resiliance. For example, suppose that, in order to implement the distribution $\frac{1}{3}(TR+BL+BR)$ in Figure~\ref{appgame}, we allowed the players to share the mixed state $\rho$ which was $\ket{01}$ with probability 1/3, $\ket{10}$ with probability 1/3, and $\ket{11}$ with probability 1/3. The access to such a mixed state obviously allows the players to achieve the desired outcome distribution (by measuring in the standard basis), and it is furthermore obvious that no player can improve his utility by deviating. (Since our mixed state has only classical uncertainty of which state the players share, the players cannot take advantage of quantum interference effects.)

The difficulty comes when we ask how we obtained the mixed state $\rho$. One method is to begin with the state
$$\ket{\psi} = \frac{1}{\sqrt{3}}(\ket{0}\ket{1}\ket{00} + \ket{1}\ket{0}\ket{01} + \ket{1}\ket{1}\ket{10})$$
and to give the first qubit from $\ket{\psi}$ to player 1 and the second qubit to player 2. The complication is what we do with the remaining two qubits. If we were to give these qubits to either player 1 or player 2, the player could measure the qubits to deduce the other player's advice, and therefore has the opportunity to deviate and improve his utility. An implementation which relied on the mixed state $\rho$ would therefore be undesirable in that, if either player managed to get access to the missing two qubits of the purification, he might be able to improve his utility by deviating.

One way to solve this problem is to introduce a third player to the game, to give this player only a single possible action $O$ in $G$, and to give him utility 0 in all outcomes of the game. Classically, this three-player game inherits much of the equilibrium structure of the original game, since the third player has no choice in his action. Furthermore, we can indeed achieve the outcome distribution $\frac{1}{3}(TRO+BLO+BRO)$ by a QCE in this three-player game by using the state $\ket{\psi}$ from above and giving the last two qubits to the third player.\footnote{This construction works in general- given any classical CE outcome distribution of a normal form game $G$, we can construct a game $G'$ which has one additional player (where this new player only has a single action, and receives utility 0 regardless of the outcome) such that we can implement the corresponding outcome distribution in $G'$ by a QCE.}

From a mechanism-design point of view, however, this QCE in the three-player game is very different from the classical CE implementation in the two-player game. In particular, the addition of the third player makes the three-player quantum implementation vulnerable to collusion (since either player 1 or player 2 could improve his utility by colluding with player 3) while the two-player classical implementation does not have this vulnerability. Since we do not want to introduce unnecessary collusive vulnerabilities into our implementation, we find this construction of adding an additional player to the game unsatisfactory. Indeed, our main motivation for introducing QCE was to avoid the necessity of a trusted third party, and allowing for mixed states reintroduces this difficulty.

The underlying difficulty of using mixed states is that, in their construction, information leaks outside of the system. If a player were to gain access to this information, then he might use this knowledge to improve his utility. It is therefore difficult to imagine a method of constructing an initial shared mixed state without use of a third party. By using only pure states, we can avoid much of the difficulty in preparing the initial state.

\section{Normal Form CE with no Quantum Equivalent}\label{qceappendix}

In this Appendix, we study a particular CE in a normal form game and prove that no QCE achieves this outcome distribution. The game we analyze is as follows: 

\begin{figure}[htb]
\begin{center}
\begin{game}{2}{2}
	& $L$	& $R$ \\
$T$	&$0,0$	&$6,6$\\
$B$	&$6,6$	&$0,0$
\end{game}
\end{center}
\caption{The CE distribution $\frac{1}{3}(TR+BL+BR)$ cannot be achieved in a QCE.}
\label{appgame}
\end{figure}
This game has a classical correlated equilibrium $\frac{1}{3}(TR+BL+BR)$ with expected utility $4$ for each player. As shown earlier, this outcome distribution is not implemented in canonical QCE by the shared state $\frac{1}{\sqrt{3}}(\ket{01}+\ket{10}+\ket{11})$, since either player can measure in the Hadamard basis to improve his utility.

In this Appendix, we prove that it is impossible to use a more complicated shared state to obtain a canonical QCE implementation of the desired distribution.\footnote{Recall that if this distribution can be implemented by a QCE, then it has a canonical QCE implementation, so it suffices to prove that there is no canonical QCE.} In Appendix~\ref{failed}, to introduce of our analysis technique, we present an incorrect implementation attempt, and prove that one player has incentive to deviate from his prescribed protocol. In Appendix~\ref{generalization} we generalize this analysis, resulting in a set of conditions that any shared state implementing the outcome distribution in a canonical QCE must satisfy. Finally, in Appendix~\ref{impossibility}, we prove that no state satisfies these conditions. 

\subsection{Example Analysis of Failed Implementation Attempt}\label{failed}

To give intuition for our analysis approach, we present a failed attempt at achieving the desired outcome distribution in a canonical QCE. 
Consider having the players share the state

\begin{align*}
\ket{\psi} &= \frac{1}{\sqrt{6}}\ket{00}\ket{10} - \frac{1}{\sqrt{6}}\ket{00}\ket{11} + \frac{1}{\sqrt{6}}\ket{10}\ket{00} + \frac{1}{\sqrt{6}}\ket{11}\ket{00} + \frac{1}{\sqrt{12}}\ket{10}\ket{10} \\
& + \frac{1}{\sqrt{12}}\ket{10}\ket{11} + \frac{1}{\sqrt{12}}\ket{11}\ket{10} + \frac{1}{\sqrt{12}}\ket{11}\ket{11}
\end{align*}
where the first two qubits belong to the row player and the last two qubits belong to the column player. A player is instructed to measure his qubits in the standard basis, and to play the action according to his first qubit. (Thus, if the row player measures a $\ket{0}$ in the first register, he should play $T$, etc.)

Suppose that player 2 (the column player) follows his prescribed protocol. Player 1 (the row player) is now facing the following scenario:

\begin{center}
\begin{tabular}{c|c|c|c|c}
\hline
Probability & P2's state & P1's state  & $u(T)$ & $u(B)$ \\
\hline
1/3 & $\ket{00}$ & $\frac{1}{\sqrt{2}}\ket{10}+\frac{1}{\sqrt{2}}\ket{11}$ & 0 & 6 \\ 
1/3 & $\ket{10}$ & $\frac{1}{\sqrt{2}}\ket{00} + \frac{1}{2}\ket{10} + \frac{1}{2}\ket{11}$  & 6 & 0 \\
1/3 & $\ket{11}$ & $- \frac{1}{\sqrt{2}}\ket{00} + \frac{1}{2}\ket{10} + \frac{1}{2}\ket{11}$  & 6 & 0\\
\end{tabular}
\end{center}

We now write $\ket{1+} = \frac{1}{\sqrt{2}}(\ket{10}+\ket{11})$. With this notation, the row player's situation is as follows:
\begin{center}
\begin{tabular}{c|c|c|c}
\hline
Probability & P1's state  & $u(T)$ & $u(B)$ \\
\hline
1/3 & $\ket{1+} $& 0 & 6 \\ 
1/3 &  $\frac{1}{\sqrt{2}}\ket{1+} + \frac{1}{\sqrt{2}}\ket{00}$  & 6 & 0 \\
1/3 &  $\frac{1}{\sqrt{2}}\ket{1+} - \frac{1}{\sqrt{2}}\ket{00}$  & 6 & 0\\
\end{tabular}
\end{center}
Since $\ket{1+}$ and $\ket{00}$ are orthogonal, we observe that player 1's performance of any strategy for the above scenario would perform identically to using that strategy in the scenario:
\begin{center}
\begin{tabular}{c|c|c|c}
\hline
Probability & P1's state  & $u(T)$ & $u(B)$ \\
\hline
1/3 & $\ket{1+} $& 0 & 6 \\ 
1/3 &  $\ket{1+}$  & 6 & 0 \\
1/3 &  $ \ket{00}$  & 6 & 0\\
\end{tabular}
\end{center}

The above scenario is simply a classical probabilistic mixture of the orthogonal advice states $\ket{1+}$ or $\ket{00}$. Therefore, player 1 can do no better than playing $B$ when given $\ket{1+}$ and playing $T$ when given $\ket{00}$, for an expected utility of 4. This is the expected utility of following the QCE protocol, and therefore player 1 cannot gain any utility by deviating.

Unfortunately, the state $\ket{\psi}$ does not yield a canonical QCE. The reason is that the column player can improve his utility by deviating. By performing similar analysis to that above, we see that the column player is facing the following scenario:

\begin{center}
\begin{tabular}{c|c|c|c|c}
\hline
Probability & P1's state & P2's state  & $u(L)$ & $u(R)$ \\
\hline
1/3 & $\ket{00}$ & $\frac{1}{\sqrt{2}}\ket{10}-\frac{1}{\sqrt{2}}\ket{11}$ & 0 & 6 \\ 
1/3 & $\ket{10}$ & $\frac{1}{\sqrt{2}}\ket{00} + \frac{1}{2}\ket{10} + \frac{1}{2}\ket{11}$  & 6 & 0 \\
1/3 & $\ket{11}$ & $\frac{1}{\sqrt{2}}\ket{00} + \frac{1}{2}\ket{10} + \frac{1}{2}\ket{11}$  & 6 & 0\\
\end{tabular}
\end{center}
We observe that the states $\frac{1}{\sqrt{2}}\ket{10}-\frac{1}{\sqrt{2}}\ket{11}$ and $\frac{1}{\sqrt{2}}\ket{00} + \frac{1}{2}\ket{10} + \frac{1}{2}\ket{11}$ are orthogonal. Therefore, the column player can distinguish the first case from the bottom two cases with certainty, and can therefore achieve expected utility $\frac{1}{3}(6+6+6)=6$. Thus, he has incentive to deviate from his prescribed protocol.

\subsection{Density Matrix Constraints}\label{generalization}

In this section, we generalize the analysis from Appendix~\ref{failed} to study the properties of any state $\ket{\psi}$ which implements the outcome distribution $\frac{1}{3}(TR+BL+BR)$ in a canonical QCE. We will prove in Appendix~\ref{impossibility} that no such state $\ket{\psi}$ exists.

We first look at the mixed state of the row player conditional on the column player's advice. Conditional on the column player having first qubit $\ket{0}$, we call the density matrix of the row player's qubits $\rho$, and conditional on the column player's first qubit being $\ket{1}$, we call the row player's density matrix $\sigma$.

We know that when the column player has first qubit $\ket{0}$, the first qubit of the row player's state must be $\ket{1}$ (since $TL$ is never played in the equilibrium), while when the column player has first qubit $\ket{1}$, the row player must have equal probability of measuring $\ket{0}$ or $\ket{1}$ in his first qubit. Therefore, we can write

$$\rho = \left[ \begin{array}{c|c} 0 & 0 \\ \hline 0 & \tilde{\rho} \end{array} \right] ;  \qquad
 \sigma = \left[ \begin{array}{c|c} \sigma_1 & \sigma_2 \\ \hline \sigma_2^\dagger & \sigma_3 \end{array} \right]$$
where the first half of the diagonal entries (those which lie in the top-left quadrant) correspond to the row player's first qubit being $\ket{0}$ and the second half of the diagonal entries (those in the bottom-right quadrant) correspond to the row player's first qubit being $\ket{1}$. Since $\ket{\psi}$ implements the distribution $\frac{1}{3}(TR+BL+BR)$ in canonical QCE, we have $Tr(\tilde{\rho}) = 1$ and $Tr(\sigma_1)=Tr(\sigma_3) = 1/2$.

The goal of this section is to prove the following lemma:

\begin{lem}\label{densitycondition}
If $\ket{\psi}$ yields a canonical QCE implementation, then $\sigma_2$ is the zero matrix and $\sigma_3 = \frac{1}{2}\tilde{\rho}$.
\end{lem}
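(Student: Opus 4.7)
The plan is to translate the row player's no-deviation condition into a positive semi-definiteness constraint, and then to read off the required block structure of $\sigma$ from that constraint. Observe first that, since the column player follows his canonical protocol (measure his first qubit in the standard basis and play $L$ on $\ket{0}$, $R$ on $\ket{1}$), we may equivalently think of his measurement as being performed before the row player acts. Conditional on this measurement, the row player's reduced state is $\rho$ with probability $1/3$ and $\sigma$ with probability $2/3$, and these are the only two scenarios he ever faces.

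Next I would parametrize an arbitrary deviation by the row player as a two-outcome POVM $\{E_T, E_B\}$ on his own qubits (by standard arguments, ancillae give nothing more than a general POVM). Reading the payoffs off the game table, his expected utility from this strategy works out to
\[
2\,\mathrm{Tr}(E_B\rho) + 4\,\mathrm{Tr}(E_T\sigma) \;=\; 4 + \mathrm{Tr}\bigl(E_B\,(2\rho - 4\sigma)\bigr),
\]
using $E_T+E_B=I$ and $\mathrm{Tr}(\sigma)=1$. Since the canonical strategy already achieves utility $4$ (one checks this directly using $\mathrm{Tr}(\tilde{\rho})=1$ and $\mathrm{Tr}(\sigma_3)=1/2$), the QCE condition is precisely that the supremum of $\mathrm{Tr}(E_B(2\rho-4\sigma))$ over $0 \preceq E_B \preceq I$ is at most $0$. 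Since this supremum equals the sum of the positive eigenvalues of $2\rho - 4\sigma$, it is nonpositive exactly when $2\rho - 4\sigma \preceq 0$, equivalently $2\sigma - \rho \succeq 0$.

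Finally, I would exploit this matrix inequality in the block decomposition. Writing
\[
2\sigma - \rho \;=\; \begin{pmatrix} 2\sigma_1 & 2\sigma_2 \\ 2\sigma_2^\dagger & 2\sigma_3 - \tilde{\rho} \end{pmatrix} \;\succeq\; 0,
\]
the lower-right principal block $2\sigma_3 - \tilde{\rho}$ is itself PSD, with trace $2\cdot\tfrac12 - 1 = 0$; a PSD matrix of trace zero is the zero matrix, so $\sigma_3 = \tfrac{1}{2}\tilde{\rho}$. With this lower-right block now zero, standard PSD block structure (any vector supported in the lower block must lie in the kernel) forces the off-diagonal block to vanish as well, giving $\sigma_2 = 0$.

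I expect the main subtlety to be the step that converts the no-deviation condition from a statement about arbitrary quantum circuits to the clean operator inequality $\rho \preceq 2\sigma$: one must be confident that the POVM formalism really captures every circuit the row player could apply (including ancillae and deferred measurement, which Lemma~\ref{canonical} essentially justifies) and that the operator-norm maximization $\sup_{0\preceq E\preceq I}\mathrm{Tr}(EX) = \mathrm{Tr}(X^+)$ is valid. Once that reduction is in hand, the block-matrix endgame is short and the two conclusions $\sigma_3 = \tfrac{1}{2}\tilde{\rho}$ and $\sigma_2=0$ both fall out of the same PSD structure.
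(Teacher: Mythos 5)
Your proof is correct, but it takes a genuinely different route from the paper's. The paper splits the lemma into two separate arguments: it first shows $\sigma_3=\tfrac12\tilde\rho$ by exhibiting an explicit deviating circuit (measure the first qubit, then run a circuit distinguishing $2\sigma_3$ from $\tilde\rho$) that violates its completeness/soundness criterion $c(Q)\le \tfrac12+\tfrac12 s(Q)$; it then shows $\sigma_2=0$ by invoking the Helstrom minimum-error discrimination bound to reduce the no-deviation condition to $\mathrm{Tr}\left|\tfrac13\rho-\tfrac23\sigma\right|\le\tfrac13$, and finally rules out a negative eigenvalue of the block matrix $\left[\begin{smallmatrix}\sigma_1&\sigma_2\\ \sigma_2^\dagger&0\end{smallmatrix}\right]$ via a $2\times2$ principal minor. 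You instead encode an arbitrary deviation as a two-outcome POVM, compute the utility gain as $\mathrm{Tr}\bigl(E_B(2\rho-4\sigma)\bigr)$, and use $\sup_{0\preceq E\preceq I}\mathrm{Tr}(EX)=\mathrm{Tr}(X^+)$ to convert the equilibrium condition into the single operator inequality $\rho\preceq 2\sigma$, from which both conclusions follow by elementary PSD block structure (a trace-zero PSD principal block vanishes, and a zero diagonal block kills the adjacent off-diagonal block). Your condition $4c(Q)-2s(Q)-2=\mathrm{Tr}\bigl(E_B(2\rho-4\sigma)\bigr)$ is exactly equivalent to the paper's Claim about $c$ and $s$, so the two arguments are characterizing the same thing; what your version buys is a unified, self-contained derivation that avoids citing Helstrom's theorem and the singular-value bookkeeping, at the cost of having to justify (as you note) that POVMs capture every allowed circuit -- which is the standard deferred-measurement/dilation argument the paper also implicitly relies on.
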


Intuitively, the row player maximizes his utility by trying to determine if he's been given $\sigma$ instead of $\rho$. We have the following definition:

\begin{mydef*}
Let $Q$ be a quantum circuit which takes as input a mixed state and outputs either 0 or 1. We define the \textit{completeness} of $Q$, denoted $c(Q)$, to be
$$c(Q) = Pr[Q(\sigma) = 1]$$
and the \textit{soundness} of $Q$, denoted $s(Q)$, to be
$$s(Q) = Pr[Q(\rho) = 1].$$
\end{mydef*}
Notice that by measuring the first qubit and outputting 1 if this qubit is $\ket{0}$, the row player has a simple test which has completeness 1/2 and soundness 0. 
\begin{claim}\label{density}
The row player has incentive to deviate from his QCE protocol if and only if there exists a quantum circuit $Q$ such that
$$c(Q) > \frac{1}{2} + \frac{1}{2} s(Q).$$
\end{claim}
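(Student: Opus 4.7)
The plan is a direct expected-utility calculation. Since the row player has only two possible actions $T$ and $B$, any quantum strategy he might use (a circuit taking his qubits plus ancillary $\ket{0}$'s as input and ultimately producing a choice of action) is equivalent to a binary-output quantum circuit: identify output ``$1$'' with ``play $T$'' and output ``$0$'' with ``play $B$''. Call this circuit $Q$. Under this identification, $c(Q)=\Pr[Q(\sigma)=1]$ is exactly the conditional probability that the row player plays $T$ when the column player has received advice $R$, and $s(Q)=\Pr[Q(\rho)=1]$ is the conditional probability he plays $T$ when the column player has received advice $L$.

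Next, I would compute the row player's expected utility $U(Q)$ assuming the column player follows his prescribed protocol. By the target distribution $\tfrac{1}{3}(TR+BL+BR)$, the column plays $L$ with probability $1/3$ and $R$ with probability $2/3$. Conditional on column-$L$ (so the row state is $\rho$), the row player earns $6$ from playing $B$ and $0$ from playing $T$, for conditional mean $6(1-s(Q))$; conditional on column-$R$ (so the row state is $\sigma$), he earns $6$ from $T$ and $0$ from $B$, for conditional mean $6c(Q)$. Therefore
$$U(Q)=\tfrac{1}{3}\cdot 6\bigl(1-s(Q)\bigr)+\tfrac{2}{3}\cdot 6c(Q)=2-2s(Q)+4c(Q).$$

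The utility from following the canonical QCE protocol is $4$ (since $\tfrac{1}{3}(TR+BL+BR)$ gives row-payoffs $6,6,0$ on the three realized outcomes). Hence the row player has a profitable deviation if and only if some $Q$ satisfies $U(Q)>4$; substituting and rearranging $2-2s(Q)+4c(Q)>4$ yields exactly the stated inequality $c(Q)>\tfrac{1}{2}+\tfrac{1}{2}s(Q)$. Because $U(Q)$ depends on $Q$ only through $c(Q)$ and $s(Q)$, the equivalence is exact in both directions. There is essentially no technical obstacle; the one point requiring care is that the row player really does face the ensemble $\{(1/3,\rho),(2/3,\sigma)\}$ once the column player measures honestly, which follows from the standard observation that further measurements the column player performs on his remaining qubits cannot alter the reduced density matrix of the row player's share beyond the conditioning on the column player's first measured qubit.
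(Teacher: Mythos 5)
Your proposal is correct and follows essentially the same route as the paper: both reduce an arbitrary deviation to a binary-output circuit $Q$, compute the expected utility $2-2s(Q)+4c(Q)$ against the honest column player, and compare it to the baseline of $4$. The only cosmetic difference is that you present the two directions as a single biconditional (noting that utility depends on $Q$ only through $c(Q)$ and $s(Q)$), whereas the paper writes out each implication separately.
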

\begin{proof}
Suppose we have such a circuit $Q$. Consider the protocol for the row player where he runs $Q$ on his state, and plays $T$ if and only if $Q$ outputs $1$. The row player's expected utility of this strategy is
$$\frac{1}{3}(6*Pr[Q(\rho) = 0]) + \frac{2}{3}(6*Pr[Q(\sigma)=1])$$
$$= 2(1-s(Q)) + 4c(Q) = 2 + 2(2c(Q) - s(Q))$$
$$ > 2 + 2(1 - s(Q) + s(Q)) = 4$$
and therefore this strategy gives the row player expected utility above 4. Since his expected utility in $\frac{1}{3}(TR+BL+BR)$ is 4, the row player has incentive to deviate.

\bigskip

To show the other direction, suppose that the row player had some circuit $Q'$ which took his qubits as input and output either $T$ or $B$, such that his utility of playing this strategy was strictly greater than 5. Construct the quantum circuit $Q$ which simulates $Q'$ and outputs 1 whenever $Q'$ outputs $T$. We now compute
$$\mathbb{E}[u_{row}(Q')] = \frac{1}{3}(6*Pr[Q'(\rho) = B]) + \frac{2}{3}(6*Pr[Q'(\sigma)=T])$$
$$ = \frac{1}{3}(6*Pr[Q(\rho) = 0]) + \frac{2}{3}(6*Pr[Q(\sigma)=1])$$
$$= 2(1-s(Q)) + 4c(Q) > 4$$
and thus
$$4c(Q) - 2s(Q) > 2$$
so $c(Q) > \frac{1}{2} + \frac{s(Q)}{2}$, as desired.
\end{proof}

\bigskip

As an example application of this claim, we notice that in the attempted QCE implementation $$\frac{1}{\sqrt{3}}(\ket{01}+\ket{10}+\ket{11}),$$ the row player's task is to distinguish between the state $\ket{1}$ (corresponding to $\rho$) and $\frac{1}{\sqrt{2}}(\ket{0}+\ket{1})$ (corresponding to $\sigma$). By measuring in the $\ket{+}, \ket{-}$ basis and outputting 1 if and only if the measurement returns $\ket{+}$, we have a test with completeness 1 and soundness 1/2. Since $1 > .5 + .25$, the row player can indeed improve his utility by deviating.

We now suppose that $\ket{\psi}$ indeed implements a QCE, and we will look at the states
$$\rho = \left[ \begin{array}{c|c} 0 & 0 \\ \hline 0 & \tilde{\rho} \end{array} \right] ;  \qquad
 \sigma = \left[ \begin{array}{c|c} \sigma_1 & \sigma_2 \\ \hline \sigma_2^\dagger & \sigma_3 \end{array} \right]$$
 in detail, by applying Claim~\ref{density}. We first claim that $\sigma_3 = \frac{1}{2}\tilde{\rho}$. In particular, if $\sigma_3 \neq \frac{1}{2}\tilde{\rho}$, then we could perform the following test $Q$:
 \begin{itemize}
 	\item Measure the first qubit in the standard basis. If this qubit is $\ket{0}$, output 1 and halt. If this qubit is $\ket{1}$, continue.
	\item Since $2\sigma_3 \neq \tilde{\rho}$, there exists a quantum circuit $Q'$ acting on the remaining qubits such that
	$$P[Q'(2\sigma_3) = 1] > P[Q'(\tilde{\rho}) = 1].$$
	Simulate this circuit on the remaining qubits, and output the result.
 \end{itemize}
 We notice that if the input state were $\sigma$ and the result of measuring the first qubit was 1, then the resulting state (ignoring the first qubit) would indeed be $2\sigma_3$, Therefore, the test $Q$ has
 $$c(Q) = \frac{1}{2} + \frac{1}{2}Pr[Q'(2\sigma_3 = 1)]$$
 $$s(Q) = Pr[Q'(\tilde{\rho}) = 1]$$
 and hence $c(Q) > \frac{1}{2} + \frac{s(Q)}{2}$. By Claim~\ref{density}, the row player has incentive to deviate.

Our next goal is to show that $\sigma_2$ must be the 0 matrix. To do this, we consider the experiment where we are given mixed state $\sigma$ with probability $\eta_1 = 2/3$ and state $\rho$ with probability $\eta_2 = 1/3$. We have a quantum measurement circuit $Q$ which is supposed to output $1$ when given $\sigma$ and 0 when given $\rho$. We define the error probability of $Q$ to be
$$P_E(Q) = \eta_1 P[Q(\sigma) = 0] + \eta_2 P[Q(\rho) = 1]$$
$$P_E(Q) = \frac{2}{3}(1 - c(Q)) + \frac{1}{3}s(Q) = \frac{2}{3} - \frac{2}{3}c(Q) + \frac{1}{3}s(Q)$$
$$P_E(Q) = \frac{1}{3} + \frac{2}{3}\left(\frac{s(Q)}{2} + \frac{1}{2} - c(Q)\right)$$
We define the minimum error probability to be
$$P^{min}_{E} = \min_QP_E(Q) = \frac{1}{3} - \frac{2}{3} \max_Q \left( c(Q) - \frac{s(Q)}{2} - \frac{1}{2} \right).$$
We now use a result of Helstrom \cite{helstrom}, as mentioned in \cite{herzog}, which states that
$$P^{min}_{E} = \frac{1}{2}(1 - Tr|\eta_2 \rho - \eta_1 \sigma|),$$
where $|\tau| = \sqrt{\tau^\dagger \tau}$. (Thus, the trace of $|\tau|$ is the sum of the singular  values of $\tau$.) By applying Claim~\ref{density}, we know that the row player has incentive to deviate if and only if
$$\frac{1}{2}(1 - Tr|\frac{1}{3}\rho - \frac{2}{3} \sigma|) < \frac{1}{3},$$
or, equivalently, if
$$Tr|\frac{1}{3}\rho - \frac{2}{3}\sigma| > \frac{1}{3}.$$
We now notice that
$$Tr|\frac{1}{3}\rho - \frac{2}{3}\sigma| = Tr|\frac{2}{3}\sigma - \frac{1}{3}\rho| = Tr\left|  \left[ \begin{array}{c|c} \frac{2}{3}\sigma_1 & \frac{2}{3}\sigma_2 \\ \hline \frac{2}{3}\sigma_2^\dagger & 0 \end{array} \right]    \right|$$
Therefore, the row player has incentive to deviate if an only if
$$Tr\left|  \left[ \begin{array}{c|c} \sigma_1 & \sigma_2 \\ \hline \sigma_2^\dagger & 0 \end{array} \right]    \right| > \frac{1}{2}.$$
Since the trace of the absolute value of a matrix equals the sum of its singular values, we can write the above condition as
$$\sum_i \left| \lambda_i\left( \left[ \begin{array}{c|c} \sigma_1 & \sigma_2 \\ \hline \sigma_2^\dagger & 0 \end{array} \right]  \right) \right| > \frac{1}{2}. $$
Since we know that
$$\sum_i \lambda_i\left( \left[ \begin{array}{c|c} \sigma_1 & \sigma_2 \\ \hline \sigma_2^\dagger & 0 \end{array} \right]  \right)   = Tr(\sigma_1) = \frac{1}{2},$$
the row player has incentive to deviate if and only if the matrix $M = \left[ \begin{array}{c|c} \sigma_1 & \sigma_2 \\ \hline \sigma_2^\dagger & 0 \end{array} \right]$ has a negative eigenvalue. We claim that this occurs if and only if $\sigma_2$ is non-zero.

If $\sigma_2 = 0$, then clearly the nonzero eigenvalues of $M$  are equal to the nonzero eigenvalues of $\sigma_1$ (a positive semi-definite matrix). Therefore, if $\sigma_2 = 0$, the row player has no incentive to deviate.

Suppose now that $\sigma_2$ is non-zero. Then there exists some index $(i,j)$ into the top-right block of $M$ such that $M_{(i,j)}$ is non-zero. It follows from our indexing that $(i,i)$ is an index into the top-left block of $M$, $(j,i)$ is an index into the bottom-left block, and $(j,j)$ is an index into the bottom-right block.

To show that $M$ has a negative eigenvalue, we will show that $M$ is not positive semi-definite. Consider the $2 \times 2$ principal minor of $M$ given by the matrix
$$A =   \left[ \begin{array}{cc} M_{(i,i)} & M_{(i,j)} \\ M_{(j,i)} & M_{(j,j)} \end{array} \right].$$
We now compute
$$det(A) = M_{(i,i)}M_{(j,j)} - M_{(i,j)}M_{(j,i)} = M_{(i,i)}\cdot 0 - M_{(i,j)}\cdot M_{(i,j)}^* < 0.$$
Since $A$ has a principal minor with negative determinant, we conclude that $M$ has a negative eigenvalue, as desired.

\bigskip

In summary, we have shown that the row player has incentive to deviate unless
$$\sigma =   \left[ \begin{array}{c|c} \sigma_1 & 0 \\ \hline 0 & \frac{1}{2}\tilde{\rho} \end{array} \right].$$
We can derive analogous constraints on the density matrices by looking at when the column player has incentive to deviate.

\subsection{Impossibility of Satisfying Density Constraints}\label{impossibility}

Suppose that quantum state $\ket{\psi}$ is a canonical QCE implementation of the desired outcome distribution. We write
$$\ket{\psi} = \sum_{\substack{ x \in \{0,1\}^{n-1}\\ y \in \{0,1\}^{m-1}}} a_{xy} \ket{0x}\ket{1y} +  \sum_{\substack{ x \in \{0,1\}^{n-1}\\ y \in \{0,1\}^{m-1}}}  b_{xy} \ket{1x}\ket{0y} +  \sum_{\substack{ x \in \{0,1\}^{n-1}\\ y \in \{0,1\}^{m-1}}} c_{xy} \ket{1x}\ket{1y} .$$

Following our analysis from Appendix~\ref{generalization}, we will study the density matrices
\begin{align*}
\rho = 3 & \sum_{y} \sum_{x,x'} b_{xy}b_{x'y}^* \ket{1x}\bra{1x'}\\
\sigma = \frac{3}{2} & \left( \sum_{y} \sum_{x,x'} a_{xy}a_{x'y}^* \ket{0x}\bra{0x'} + a_{xy}c_{1x'1y}^* \ket{0x}\bra{1x'} + c_{xy}a_{x'y}^*\ket{1x}\bra{0x'} \right. \\
& + c_{xy}c_{x'y}^* \ket{1x}\bra{1x'}  \Bigg)  
\end{align*}
where the constants of $3$ and $\frac{3}{2}$ come from normalizing the density matrix after we take the conditional probability.

We now apply Lemma~\ref{densitycondition} to observe that
$$\frac{3}{2} \sum_{y}\sum_{x,x'} a_{xy}c_{x'y}^* \ket{0x}\bra{1x'}$$
must be a matrix of entirely zeroes (corresponding to the top-right block of $\sigma$). Therefore, we have the constraint
$$
\forall x, x': \sum_y a_{xy}c_{x'y}^* = 0.
$$
Furthermore, by comparing the bottom-right blocks of $\sigma$ and $\rho$, we have the constraint
$$\forall x, x': \sum_y b_{xy}b_{x'y}^* = \sum_y c_{xy}c_{x'y}^*.$$
We now apply the result analogous to Lemma~\ref{densitycondition} for the column player. This yields the constraints
$$\forall y, y': \sum_x b_{xy}c_{xy'}^* = 0$$
$$\forall y, y': \sum_x a_{xy}a_{xy'}^* = \sum_x c_{xy}c_{xy'}^*.$$
Finally, since $\ket{\psi}$ implements our desired outcome distribution, we have
$$\sum_{x,y}|a_{xy}|^2 = \sum_{x,y}|b_{xy}|^2 = \sum_{x,y}|c_{xy}|^2 = \frac{1}{3}.$$
The state $\ket{\psi}$ is a canonical QCE implementation of the desired outcome distribution if and only if all of the above conditions are satisfied.

We claim that the above set of equations has no solution. Define the matrices $A$, $B$, and $C$, where $A_{ij} = a_{ij}$, $B_{ij} = b_{ij}$, and $C_{ij} = c_{ij}$. We rewrite the above constraints as:

\begin{eqnarray*}
Tr(AA^\dagger) = Tr(BB^\dagger) = Tr(CC^\dagger) =& \frac{1}{3}\\
AC^\dagger =& 0\\
B^\dagger C =& 0 \label{BC} \\
BB^\dagger =& C C^\dagger\\
A^\dagger A =& C^\dagger C.
\end{eqnarray*}
We now compute
$$(CC^\dagger)(CC^\dagger) = C(C^\dagger C)C^\dagger = C(A^\dagger A)C^\dagger = (CA^\dagger)(AC^\dagger)=0.$$
We claim that the above equation implies that $CC^\dagger = 0$. Indeed, let $v$ be an eigenvector of the hermitian matrix $CC^\dagger$ with eigenvalue $\lambda$. We now compute
$$v^T (CC^\dagger)(CC^\dagger)v =  v^T(CC^\dagger)v \cdot \lambda = \lambda^2 v^T v = 0$$
and thus $\lambda = 0$. Since $v$ was an arbitrary eigenvector, we conclude that $CC^\dagger$ is the all-zero matrix. But this contradicts the fact that $Tr(CC^\dagger) = 1/3$.
 
 Therefore, there does not exist a state $\ket{\psi}$ satisfying the density matrix conditions of Appendix~\ref{generalization}, and thus there is no QCE achieving the outcome distribution $\frac{1}{3}(TR+BL+BR)$.

\section{The Complete Information GHZ Game}\label{ghz}

To prove Theorems~\ref{ghztheorem} and \ref{quantumefce}, we use a complete-information version of the GHZ game from \cite{ghz}. The standard, incomplete information GHZ game has three players: Alice, Bob, and Charlie. They are given input bits $a$, $b$, and $c$ respectively, with the promise that $a+b+c = 0 \pmod{2}$. The players output bits $x$, $y$, and $z$ respectively, and they win if $x+y+z \pmod{2} = a \vee b \vee c$.

It is straightforward to show that no classical strategy allows the three players to win with certainty. However, if they share an entangled state, then they can always win. In particular, suppose that they share the state
$$\ket{\psi_{GHZ}} = \frac{1}{2}(\ket{000} - \ket{011} - \ket{101} - \ket{110}),$$
where the first bit belongs to Alice, the second bit to Bob, and the third bit to Charlie. It is easy to check that the players always win if they all use the quantum protocol ``Apply a Hadamard transformation to your qubit if and only if your input bit is 1. Then measure your qubit in the standard basis and play the result.''

We now consider a ``complete information GHZ game'' (denoted cGHZ) which has four players: Alice, Bob, Charlie, and Nate. In the cGHZ game, Nate moves first, and has four possible moves (corresponding to each assignment of bits $a$, $b$, and $c$ such that $a+b+c=0 \mod{2}$.) Alice, Bob, and Charlie then move in turn, where each of these players can only distinguish between information sets based on their own input bit value. Each of these players has two possible moves (corresponding to the value they choose for their output bit) and we give each of Alice, Bob, and Charlie payoff 1 if their moves correspond to a winning outcome of the GHZ game, and payoff 0 otherwise. We give Nate a payoff equal to 1 minus Alice's payoff.

It is clear that there is no EFCE or IR-EFCE in which Alice, Bob, and Charlie have expected utility 1. (In particular, Nate can always guarantee himself non-zero expected utility by mixing randomly between his four actions.) In both of these equilibrium concepts, the advice at each information set is determined before the game begins, and for any fixed set of advice, Nate could get non-zero expected utility by mixing.

However, consider the QCE in which Alice, Bob, and Charlie share the state $\ket{\psi_{GHZ}}$ as before (where Nate holds none of the qubits from $\ket{\psi_{GHZ}}$), where Nate mixes uniformly between his 4 actions, and where the other three players act according to the winning strategy of the GHZ game. It is clear that this is a QCE: Nate cannot improve his utility by deviating, since Alice, Bob, and Charlie win regardless of Nate's move. The other three players have no incentive to deviate, since their expected utility of 1 is maximal. Since the outcome distribution of this QCE has expected utility 1 for Alice, Bob, and Charlie, there is no EFCE or IR-EFCE achieving this distribution. This completes the proof of Theorem~\ref{ghztheorem}.

\section{Proof of Theorem~\ref{quantumefce}}\label{finalproof}

We will now prove Theorem~\ref{quantumefce} by showing an extensive form game $G$ and outcome distribution $D$ which can be achieved by a QCE and by an EFCE but not by any IR-EFCE. The game $G$ is a five-player game combining  aspects of the cGHZ game with the game from Figure~\ref{EFCEnoQCE}. We construct this game by beginning with the structure from Figure~\ref{EFCEnoQCE}, and leave the $IN$ branch unchanged. (We give players 3, 4, and 5 payoffs of 0 in the outcomes $(IN,a,L), (IN,a,R), (IN,b,L)$, and $(IN,b,R)$.) In the $OUT$ branch, however, instead of having payoff $(3,3)$, we have a version of cGHZ game, where player 1 takes the role of  Nate and players 3, 4, and 5 have the roles of Alice, Bob, and Charlie. (Player 2 gets no moves in this branch of the tree, and receives payoff 0 in all of the outcomes.) In this version of the cGHZ game, players 3, 4, and 5 get payoff 1 if they succeed and player 1 gets payoff 0. If players 3, 4, and 5 fail in the cGHZ game, then they get 0 payoff while player 1 gets payoff 50.

Consider the outcome distribution $D = 1/2(IN,a,L) + 1/2(IN,b,R)$. I claim that $D$ can be achieved by a QCE and by an EFCE but not by any IR-EFCE. First, we'll show that there is a QCE with outcome distribution $D$. Consider the QCE where players 3, 4, and 5 share the entangled state $\ket{\psi_{GHZ}}$ (and they are instructed to use the appropriate circuits from the GHZ protocol) and players 1 and 2 share the state $\frac{1}{\sqrt{2}}(\ket{0}\ket{0} + \ket{1}\ket{1})$. We instruct player 1 to play $IN$, and then in his next information set to measure his qubit in the standard basis. Player 2 is instructed (if he has the opportunity to move) to measure his qubit in the standard basis and to play the corresponding move.

If everyone follows their prescribed protocol, then it is clear that the resulting outcome distribution is indeed $D$. I now claim that no player can benefit by deviating. In particular, players 3, 4, and 5 never have the opportunity to move (since player 1 is playing $IN$) and thus have no incentive to deviate. Player 1 realizes that, if he were to play $OUT$ instead of $IN$, that he would receive payoff 0 (since players 3, 4, and 5 will always win the cGHZ game), and thus he will indeed play $IN$. Once he plays $IN$, it is in his interest for his next action to coordinate with player 2. Similarly, it is obvious that player 2 has no incentive to deviate.

Furthermore, it is clear that $D$ can be achieved by an EFCE. The key point is that player 1's expected utility of following his advice and playing $IN$ in his first information set is 51, while he can never get utility more than 50 by playing $OUT$. The remainder of the argument is analogous to the argument above.

We finally claim that there is no IR-EFCE achieving distribution $D$. The reason is that player 1 can obtain expected utility at least $\frac{50}{4}>2$ by playing $IN$ and then mixing randomly between his four available actions in the cGHZ game. Therefore, if he sees that his later advice will be $R$, he can improve his utility immediately by deviating and playing $OUT$.


\begin{thebibliography}{}


\bibitem{aumann} R. J. Aumann. Subjectivity and Correlation in Randomized Strategies. Journal of Mathematical Economics, 1: 67--96, 1974.

\bibitem{bennett} C. H. Bennett, H. J. Bernstein, S. Popescu, and B. Schumacher. Concentrating Partial Entanglement by Local Operations. arxiv.org preprint quant-ph/9511030v1, Nov. 1995.

\bibitem{chsh} J. F. Clauser, M. A. Horne, A. Shimony, and R. A. Holt. Proposed Experiment to Test Local Hidden-Variable Theories. Physical Review Letters, 23(15):880--884, 1969.

\bibitem{cleve} R. Cleve, P. Hoyer, B. Toner, J. Waltrous. Consequences and Limits of Nonlocal Strategies. IEEE Conference on Computational Complexity 2004: 236--249.

\bibitem{daskalakis} C. Daskalakis, P. Goldberg, and C. Papadimitriou. The Complexity of Computing a Nash Equilibrium. Communications of the ACM 52(2):89--97, 2009.

\bibitem{dodis} Y. Dodis, S. Halevi, and T. Rabin. A Cryptographic Solution to a Game Theoretic Problem. Advances in Cryptology-- CRYPTO 2000 (Santa Barbara, CA), Lecture Notes in Computer Science, Vol. 1880. Springer-Verlag, Berlin, 112--130, 2000.

\bibitem{forges2} F. Forges. An Approach to Communication Equilibria. Econometrica, 54(6):1375--1385, 1986.

\bibitem{forges} F. Forges. Correlated Equilibrium in Games with Incomplete Information Revisited. Theory and Decision 61 329--344, 2006.

\bibitem{ghz} D. Greenberger, M. Horne, A. Shimony, and A. Zeilinger. Bell's Theorem Without Inequalities. American Journal of Physics 58(12):1131--1143, 1990.

\bibitem{helstrom} C. Helstrom. Quantum Detection and Estimation Theory. Academic Press, New York, 1976.

\bibitem{herzog}
U. Herzog and J. Bergou. Disgintuishing Mixed Quantum States: Minimum-error Discrimination Versus Optimum Unambiguous Discrimination. arxiv.org preprint quant-ph/0403124v2, May 2004.

\bibitem{huberman} B. Huberman and T Hogg. Quantum Solution of Coordination Problems. arxiv.org preprint quant-ph/0306112, July 2003.

\bibitem{iqbal} A. Iqbal and S. Weigert. Quantum Correlation Games. arxiv.org preprint quant-ph/0306176, June 2003.

\bibitem{ilm1} S. Izmalkov, M. Lepinski, and S. Micali. Rational Secure Computation and Ideal Mechanism Design. Proceedings of the 46th Symposium on Foundations of Computer Science, 585--594, 2005.

\bibitem{ilm2} S. Izmalkov, M. Lepinski, and S. Micali. Verifiably Secure Devices. Proceedings of the 5th Theory of Cryptography Conference, 273--301, 2008.

\bibitem{lamura} P. La Mura. Correlated Equilibria of Classical Strategic Games with Quantum Signals. arxiv.org preprint quant-ph/0309033, Sept. 2003.

\bibitem{lepinski} M. Lepinski, S. Micali, C. Peikert, and A. Shelat. Completely Fair SFE and Coalition-Safe Cheap Talk. Proceedings of the 23rd Annual ACM Symposium on Principles of Distributed Computing, 1--10, 2004.

\bibitem{meyer} D. Meyer. Quantum Games and Quantum Algorithms. Quantum Computation and Quantum Information Science, Contemporary Mathematics Series. American Mathematical Society, 2000.

\bibitem{papadimitriou} C. Papadimitriou. Computing Correlated Equilibria in Multiplayer Games. In Proceedings of STOC, 2005.

\bibitem{urbano} A. Urbano and J. E. Vila. Computation Complexity and Communication: Coordination in Two-Player Games. Econometrica 70 1893--1927, 2002.

\bibitem{vonstengel} B. von Stengel and F. Forges. Extensive Form Correlated Equilibrium: Definition and Computational Complexity. Mathematics of Operations Research, 33(4):1002--1022, 2008.

\bibitem{shengyu} S. Zhang. Quantum Strategic Game Theory. arxiv.org preprint quant-ph/1012.5141, Dec. 2010.


\end{thebibliography}
\end{document}